\crefname{pseudoline}{line}{lines}
\Crefname{pseudoline}{Line}{Lines}
\sffamily\color{gray}\arabic*,
\quad\ctfont{[},
\setlist[description]{parsep=0.1px}
\DeclareMathOperator{\mon}{Mon}
\DeclareMathOperator{\lm}{lm}
\DeclareMathOperator{\NF}{nf}
\DeclareMathOperator{\nxt}{next}
\DeclareMathOperator{\Id}{Id}
\DeclareMathOperator{\cand}{cand}
\DeclareMathOperator{\lift}{lift}
\DeclareMathOperator{\result}{result}
\DeclareMathOperator{\gen}{gen}
\DeclareMathOperator{\homo}{hom}
\DeclareMathOperator{\rk}{rank}
\newcommand\prim{\mathfrak{p}}
\newcommand\field{\mathbb{K}}
\newcommand\NN{\mathbb{N}}
\newcommand\ZZ{\mathbb{Z}}
\newcommand\QQ{\mathbb{Q}}
\newcommand\FF{\mathbb{F}}
\newcommand\xx{\mathbf{x}}
\newcommand\zz{\mathbf{z}}
\newcommand\xz{\zz,\xx}
\newcommand\genI{I^{\gen}}
\newcommand\homI{I^{\homo}}
\newcommand\maxi{\mathfrak{m}}
\newcommand\mord{\prec}
\newcommand\mordx{\mord_{\xx}}
\newcommand\mordz{\mord_{\zz}}
\DeclareMathOperator\INput{in}
\newcommand\ormord{\prec_{\INput}}
\DeclareMathOperator\OUTput{out}
\newcommand\targmord{\prec_{\OUTput}}
\DeclareMathOperator\DRL{drl}
\newcommand\drl{\preceq_{\DRL}}
\newcommand\drll{\prec_{\DRL}}
\newcommand\drlg{\succ_{\DRL}}
\DeclareMathOperator\LEX{lex}
\newcommand\lexl{\prec_{\LEX}}
\newcommand\loc[2]{#1[#2^{-1}]}
\newcommand\setof[2]{\left\{#1 \;\middle|\; #2\right\}}
\newcommand\compl[1]{O\left(#1\right)}
\newcommand\qcompl[1]{\widetilde{O}\left(#1\right)}
\newcommand\rT{\mathrm{T}}
\newcommand{\titlenewline}{\textsc{\texorpdfstring{\\}{ }}\xspace}
\newcommand{\Ffour}{\textsc{\texorpdfstring{F\textsubscript{4}}{F4}}\xspace}
\newcommand\Fquatre{\Ffour}
\newcommand\ffour{\Ffour}
\newcommand{\Ffive}{\textsc{\texorpdfstring{F\textsubscript{5}}{F5}}\xspace}
\newcommand\Fcinq{\Ffive}
\newcommand\FGLM{\textsc{FGLM}\xspace}
\newcommand\mondeg[1]{\mathrm{m}_{#1}}
\newcommand\oscar{\texttt{OSCAR}\xspace}
\newcommand\msolve{\texttt{msolve}\xspace}
\newcommand\julia{\texttt{Julia}\xspace}
\newcommand\myvspace[1]{\vspace*{#1}}
\def\bbordermatrix#1{\begingroup \m@th
  \@tempdima 4.75\p@
  \setbox\z@\vbox{%
    \def\cr{\crcr\noalign{\kern2\p@\global\let\cr\endline}}%
    \ialign{$##$\hfil\kern2\p@\kern\@tempdima&\thinspace\hfil$##$\hfil
      &&\quad\hfil$##$\hfil\crcr
      \omit\strut\hfil\crcr\noalign{\kern-\baselineskip}%
      #1\crcr\omit\strut\cr}}%
  \setbox\tw@\vbox{\unvcopy\z@\global\setbox\@ne\lastbox}%
  \setbox\tw@\hbox{\unhbox\@ne\unskip\global\setbox\@ne\lastbox}%
  \setbox\tw@\hbox{$\kern\wd\@ne\kern-\@tempdima\left[\kern-\wd\@ne
    \global\setbox\@ne\vbox{\box\@ne\kern2\p@}%
    \vcenter{\kern-\ht\@ne\unvbox\z@\kern-\baselineskip}\,\right]$}%
  \null\;\vbox{\kern\ht\@ne\box\tw@}\endgroup}
\title{Computing Generic Fibers of Polynomial Ideals\titlenewline
  with \FGLM and Hensel Lifting}
\author{J\'er\'emy Berthomieu}
\affiliation{%
	\institution{Sorbonne Universit\'e, \textsc{CNRS}, \textsc{LIP6}}
	\city{F-75005 Paris}
	\postcode{75252}\country{France}}
\email{jeremy.berthomieu@lip6.fr}
\author{Rafael Mohr}
\affiliation{%
	\institution{Sorbonne Universit\'e, \textsc{CNRS}, \textsc{LIP6}}
	\city{F-75005 Paris}
	\postcode{75252}\country{France}}
\affiliation{%
	\institution{Rheinland-Pfälzische Technische
          Universität Kaiserslautern-Landau, Fachbereich
          Mathematik}
	\city{G-67663 Kaiserslautern}
	\postcode{67663}\country{Germany}}
\email{rafael.mohr@lip6.fr}
\keywords{Gr\"obner basis; polynomial system solving; change of monomial order; Hensel lifting}
\begin{document}

\fancyhead{}
\newtheorem{remark}[theorem]{Remark}

\begin{abstract}
  We describe a version of the FGLM algorithm that can be used to
  compute generic fibers of positive-dimensional polynomial ideals. It
  combines the FGLM algorithm with a Hensel lifting strategy. In
  analogy with Hensel lifting, we show that this algorithm has a
  complexity quasi-linear in the number of terms of certain
  $\maxi$-adic expansions we compute. Some provided experimental data
  also demonstrates the practical efficacy of our algorithm.
\end{abstract}
\thanks{%
  The authors are supported by the joint ANR-FWF
  ANR-19-CE48-0015 \textsc{ECARP}
  and
  ANR-22-CE91-0007 \textsc{EAGLES}
  projects,
  ANR-19-CE40-0018 \textsc{De Rerum Natura}
  project,
  DFG Sonderforschungsbereich TRR 195 project
  and
  grants
  DIMRFSI 2021-02–C21/1131 of the Paris Île-de-France Region,
  FA8665-20-1-7029 of
  the EOARD-AFOSR,
  and
  Forschungsinitiative Rheinland-Pfalz.
  We thank
  the referees for their valuable comments on the paper
  and
  Ch.~Eder, P.~Lairez, V.~Neiger and M.~Safey El Din for fruitful discussions.
}

\maketitle

\section{Introduction}
\label{sec:intro}

\paragraph{Scientific Context}

\emph{Gröbner bases} lie at the forefront of the algorithmic treatment
of polynomial systems and ideals in symbolic computation. They are
defined as special generating sets of polynomial ideals which allow to
decide the ideal membership problem via a multivariate version of
polynomial long division. Given a Gröbner basis for a polynomial
ideal, a lot of geometric and algebraic information about the
polynomial ideal at hand can be extracted, such as the degree,
dimension or Hilbert function. We refer to~\cite{becker1993}
for a comprehensive treatment of the subject.

Notably, Gröbner bases depend on two parameters: The polynomial ideal
which they generate and a \emph{monomial order}, i.e.\ a certain kind
of total order on the set of monomials of the underlying polynomial
ring.  Then, the geometric and ideal-theoretic information that can be
extracted from a Gröbner basis depends on the chosen
monomial order.  For example, \emph{elimination orders} allow, as the
name suggests, to eliminate a chosen subset of variables from the
given polynomial ideal (i.e.\ to project on an affine subspace in a
geometric sense).

While Gröbner bases for elimination orders are frequently of interest,
it has been observed that all algorithms to compute Gröbner bases
based on the famous Buchberger algorithm~\cite{buchberger1965},
such as \Fquatre~\cite{faugere1999} and
\Fcinq~\cite{faugere2002}, are substantially more
well-behaved when used with non-elimination orders (most notably,
the \emph{degree reverse lexicographical} $\drll$ order).

This has motivated the design of numerous \emph{change of order}
algorithms: The task is to convert a given Gröbner basis w.r.t.\ one
order into a Gröbner basis w.r.t.\ another
order.
We mention here the Hilbert-driven algorithm by~\cite{traverso1996},
the Gröbner walk algorithm by~\cite{collart1997a} and, most notably
for this paper, the \FGLM algorithm~\cite{faugere1993a} and its
variants~\cite{faugere2014a,faugere2017a,neiger2020,berthomieu2022b}.

Furthermore, most ideal-theoretic operations in commutative algebra
(such as saturation and intersection) can be performed using Gröbner
bases by writing down a certain ideal associated to the given
polynomial ideal, choosing a certain monomial order and computing a
Gröbner basis for this associated ideal. Here, Gröbner basis
computation is used as a black box. It has recently been observed,
partly by the authors of this paper, that it can be (sometimes
substantially) more efficient to design \emph{dedicated} Gröbner basis
algorithms for specific ideal-theoretic tasks,
see~\cite{berthomieu2023,eder2023}.

\myvspace{-1em}
\paragraph{Problem Statement \& Contributions}

This paper is concerned with the algorithmic treatment of the
following problem: Fix a polynomial ring $R\coloneqq\field[\xz]$ in
two finite sets of variables $\xx$ and $\zz$ over a field $\field$ and
an ideal $I$ in $R$. Assume that the map
$\varphi :\field[\zz]\rightarrow \field[\xz]/I$ is injective and has \emph{generically
  finite fiber}, i.e.\ assume that the the \emph{generic fiber}
$\genI\coloneqq I\cdot\field(\zz)[\xx]$ of $I$ is zero-dimensional. Given a
Gröbner basis of $I$ w.r.t.\ a monomial order $\ormord$, we want to
compute a Gröbner basis $G$ of $\genI$ w.r.t.\ another monomial order
$\targmord$. One key motivation to solve this problem is that, when
$\targmord$ is a suitable elimination order, the Gröbner basis $G$ can
be used to compute a primary decomposition of the ideal $I$ (or an
irreducible decomposition of the algebraic set defined by $I$), see
\cite{becker1993} for details. Being able to compute such
decompositions has numerous applications, we mention for example the
algorithm presented in \cite{helmer2023b} which uses primary
decompositions to compute so-called Whitney stratifications of
singular varieties.

The algorithm we design to solve this problem relates to the two
research directions previously mentioned: It is a \emph{dedicated}
algorithm to perform an ideal-theoretic operation (by computing a
representation of the generic fiber of a suitably chosen map) and it
performs a change of order (by going from $\ormord$ to
$\targmord$). Our proposed solution to this problem can be seen as a
combination of the previously mentioned \FGLM algorithm with classical
Hensel lifting techniques. More precisely, if we let
$\maxi \coloneqq \langle \zz \rangle$, then, under some assumptions which are detailed in
this paper, we will compute $G$ by computing its image in
$(\field[\zz]/\maxi)[\xx] \simeq \field[\xx]$ and then lifting it modulo
higher and higher powers of $\maxi$. This lifting step uses the same
core idea as the \FGLM algorithm. With this approach, we expect that
our algorithm can be transported without much difficulty to the
setting where a Gröbner basis $G$ of a zero-dimensional ideal in
$\QQ[\xx]$ is required: Given $p$ a well-chosen prime number and $k\in\NN^*$
sufficiently large,
it would extract $G$ from its image in $(\ZZ/p\ZZ)[\xx]$ and then lift it modulo
$p^k$.
We
show that, similar to classical Hensel lifting, our algorithm runs in
arithmetic complexity quasi-linear in the number of terms of degree at most
the precision up to which we
need to lift when a ``quadratic lifting strategy'' is chosen, see
\Cref{cor:algcompl}, which implies in particular the following

\begin{theorem}
  \label{th:main}
  Let $f_1,\ldots,f_c$ be generic polynomials of respective degrees
  $d_1,\ldots,d_c$ in $\field[\xz]$. Assume that the
  $\drll$-Gröbner basis of $I=\langle f_1,\ldots,f_c\rangle$ is known and that the
  $\targmord$-Gröbner basis $G$ of
  $I\cdot\field(\zz)[\xx]$ has coefficients which are
  rational functions with degrees at most $\delta$ in the numerators and
  denominators. Let $\mondeg{2\delta}$ be the number of monomials in
  $\zz$ up to degree $2\delta$. Then, one can compute $G$ up
  to precision $2\delta$ using
  $\qcompl{\mondeg{2\delta} c (d_1\cdots d_c)^3}$ operations in $\field$.
\end{theorem}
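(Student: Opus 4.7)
The plan is to derive the bound as a direct specialization of the general complexity estimate of \Cref{cor:algcompl}, by instantiating the parameters in that corollary to the case of a generic complete intersection.

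First I would establish the key numerical invariant governing the cost of the underlying \FGLM phase. Since $f_1,\ldots,f_c$ are generic of degrees $d_1,\ldots,d_c$ in $n$ variables, the generic fibre $\genI=I\field(\zz)[\xx]$ is a zero-dimensional complete intersection in $\field(\zz)[x_1,\ldots,x_c]$, and the Bezout bound is attained. Hence the $\field(\zz)$-vector space dimension is $D \coloneqq \dim_{\field(\zz)} \field(\zz)[\xx]/\genI = d_1\cdots d_c$, which is precisely the quantity that appears cubed in the \FGLM cost.

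Second, I would verify the remaining hypotheses of \Cref{cor:algcompl}. Genericity of the $f_i$ ensures that, after a generic translation in $\zz$ if necessary, the specialisation $\zz\mapsto 0$ is ``lucky'' in the sense required by the algorithm: the $\drll$-staircase of $I$ is preserved, so $\field[\xx]/I|_{\zz=0}$ has dimension $D$, and the $\targmord$-staircase of $\genI$ is the same as that of its specialisation. The algorithm then starts from the given $\drll$-Gröbner basis of $I$, performs one \FGLM pass at precision $1$ to obtain the $\targmord$-Gröbner basis of the specialised ideal, and iterates quadratic Hensel lifting steps, doubling the $\zz$-adic precision at each step until reaching precision $2\delta$, which is the amount needed for rational reconstruction of coefficients whose numerators and denominators have degree at most $\delta$.

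Third, I would substitute the instantiated parameters into the quasi-linear bound of \Cref{cor:algcompl}: with $c$ as the number of $\xx$-variables, $D=d_1\cdots d_c$ as the vector space dimension of the fibre, and $2\delta$ as the target precision, this yields $\qcompl{\delta\cdot c\cdot D^3}=\qcompl{\delta c (d_1\cdots d_c)^3}$ operations in $\field$, as claimed. The main obstacle I expect is verifying the lucky specialisation hypothesis from genericity alone: one must check that, for a Zariski-dense choice of $(f_1,\ldots,f_c)$, both the $\drll$-leading ideal of $I$ and the $\targmord$-leading ideal of $\genI$ commute with the chosen $\zz$-specialisation. This should follow from the flatness of $\field[\xz]/I$ over $\field[\zz]$ at a generic point together with standard upper semicontinuity of initial ideals, but it is the only step that is not a mechanical substitution into \Cref{cor:algcompl}.
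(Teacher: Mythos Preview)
Your proposal is essentially correct and follows the same architecture as the paper: reduce to \Cref{cor:algcompl} after establishing $D=d_1\cdots d_c$ from genericity. The one point where the paper is more precise is the bridge between the hypothesis ``the $\drll$-Gröbner basis of $I$ is known'' and the hypothesis of \Cref{cor:algcompl}, which asks for the $\ormord$-Gröbner bases $H_{2^i}$ of the truncated ideals. You treat this under the heading of ``lucky specialisation'' (i.e.\ $\maxi$ being a point of good specialization), but that alone only guarantees the algorithm is well-defined, not that the $H_u$ are obtained for free from the $\drll$-basis of $I$. The paper handles this via \Cref{lem:fvar}: genericity forces $\homI$ to be Cohen--Macaulay with $I$ in projective generic position, whence $\lm_{\drll}(H)\subset\mon(\xx)$, so the single $\drll$-Gröbner basis of $I$ already serves as $H_u$ for every $u$ and the $\drll$-staircase of $I_u$ has exactly the layered structure $\bigcup_{v\drl u} vS$ used in the proofs of \Cref{thm:lifttensor} and \Cref{thm:compl}. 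Your flatness/semicontinuity sketch is in the right direction but does not by itself yield $\lm_{\drll}(H)\subset\mon(\xx)$; making that step explicit (or citing \Cref{lem:fvar}) closes the argument.
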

Note that knowing $G$ up to precision $2\delta$ is enough to recover
$G$ by means of Padé approximants, see \Cref{lem:pade}.

\paragraph{Related Work}

Gröbner bases of generic fibers, as defined in the previous paragraph,
are classically computed using \emph{block monomial orders}, see
e.g.~\cite[Lemma~8.93]{becker1993}. Besides that, morally similar to
our algorithm, there is a rich body of literature about multi-modular
Gröbner basis computations~\cite{arnold2003a, ebert1983, pauer1992,
  traverso1989a} and Hensel/modular lifting techniques for Gröbner
bases~\cite{winkler1988a, grabe1993, schost2023a}.

Outside of the world of Gröbner bases, there are other data structures
for algorithmically manipulating polynomial ideals, or the algebraic
sets defined by them, which encode polynomial ideals by their generic
fiber associated to a well-chosen projection. We mention in particular
\emph{geometric resolutions}, see e.g.~\cite{giusti2001,schost2003}, and
\emph{triangular sets}, see e.g.~\cite{hubert2003} for a survey.

Our work also relates to specialization results for Gröbner basis,
i.e.\ results on the question whether a Gröbner basis remains a Gröbner
basis after specializing some of the variables,
see~\cite{gianni1989a, kalkbrener1989, becker1994}.

\paragraph{Outline}

In 
\Cref{sec:prelim}, we give necessary preliminaries both on Gröbner
bases and on the needed commutative algebra to state and prove the
correctness of our algorithms in 
\Cref{sec:alg}. In 
\Cref{sec:compl}, we transport the complexity statements for the
original \FGLM algorithm to our setting.  Finally we give some
benchmarks for a \julia implementation of our main algorithm
in 
\Cref{sec:bench}, comparing it to computing generic fibers using just
elimination orders.

\myvspace{-0.5em}
\section{Preliminaries}
\label{sec:prelim}
\myvspace{-0.25em}
\subsection{Gröbner Bases}
\label{sec:gbs}

In order to be self-contained, 
we recall some definitions
and basic properties related to Gröbner bases of polynomial ideals.

For a set of variables $\xx\coloneqq\{x_1,\dots,x_n\}$, we denote by
$\mon(\xx)$ the set of monomials in $\xx$, and for a field
$\field$, we let $R\coloneqq\field[\xx]$ be the multivariate polynomial ring in
$\xx$ over $\field$.
\begin{definition}
  \label{def:monorder} A \emph{monomial order} $\mord$ on $\xx$ is a total
  order on $\mon(\xx)$
  \begin{enumerate}
  \item extending the partial order on $\mon(\xx)$ given by divisibility and
  \item compatible with multiplication, i.e.\ we have
    \begin{align*}
      u \mord v \; \Rightarrow \; wu \mord wv \quad \forall u,v,w\in \mon(\xx).
    \end{align*}
  \end{enumerate}
\end{definition}

Of importance for us is the \emph{degree reverse lexicographic} order:
\begin{definition}
  \label{def:drl} The \emph{degree reverse lexicographic $\drll$} order on
  $\mon(\xx)$ is defined as follows for $u,v\in \mon(\xx)$:
  $u \drll v$ iff $\deg u < \deg v$ or $\deg u = \deg v$ and the
  last nonzero exponent of $u/v$ is positive.
\end{definition}

We will also need the notion of a \emph{block order}:
\begin{definition}
  Let $\xx$ and $\zz$ be two finite sets of variables. Write each
  monomial $u\in \mon(\xx \cup \zz)$ uniquely as a product
  $u = u_{\xx}u_{\zz}$ with $u_{\xx}\in \mon(\xx)$ and
  $u_{\zz}\in \mon(\zz)$. Fix a monomial order $\mordx$ on
  $\mon(\xx)$ and a monomial order $\mordz$ on $\mon(\zz)$. The
  corresponding \emph{block order eliminating $\xx$} is defined as
  follows: $u \mord v$ iff $u_{\xx} \mordx v_{\xx}$ or $u_{\xx}=v_{\xx}$
  and $u_{\zz} \mordz v_{\zz}$ for $u,v\in \mon(\xx\cup \zz)$.
\end{definition}

A monomial order on $\xx$ yields a notion of \emph{leading monomial}
in $R$:

\begin{definition}
  \label{def:lm} Let $\mord$ be a monomial order on $\mon(\xx)$. For
  a nonzero element $f\in R$ the \emph{leading monomial} of $f$ w.r.t.\
  $\mord$, denoted $\lm_{\mord}(f)$, is the $\mord$-largest monomial
  in the support of $f$. For a finite set $F$ in $R$ we define
$\lm_{\mord}(F) \coloneqq \setof{\lm_{\mord}(f)}{f\in F}$.
  For an ideal $I$ in $R$ we define the
  \emph{leading monomial ideal} of $I$ as
  $\lm_{\mord}(I)\coloneqq\langle \lm_{\mord}(f)\;|\;f\in I\rangle$.
\end{definition}

Fixing a monomial order gives \emph{normal forms} for images of
elements in quotient rings of $R$:

\begin{definition}
  Let $I$ be an ideal in $R$ and let $\mord$ be a monomial order on
  $\mon(\xx)$.
  \begin{enumerate}
  \item The set $S_{I,\mord}\coloneqq \setof{u\in \mon(\xx)}{u\notin \lm_{\mord}(I)}$
    is 
    the \emph{staircase} of $I$ w.r.t.\ $\mord$. It naturally
    forms
    a $\field$-vector space basis of $R/I$.
  \item The image of every element $f\in R$ in $R/I$ can be uniquely
    written as a $\field$-linear combination of elements in
    $S_{I,\mord}$. This linear combination of elements in
    $S_{I,\mord}$ is called the \emph{normal form} of $f$ w.r.t.\ $I$
    and $\mord$. The corresponding vector of coefficients of this
    linear combination, with the elements in $S_{I,\mord}$ ordered by
    $\mord$, will be denoted $\NF_{I,\mord}(f)$.
  \end{enumerate}
\end{definition}

We finally define the notion of Gröbner bases.

\begin{definition}
  \label{def:gb} A \emph{Gröbner basis} of an ideal $I\subset R$ w.r.t.\ a
  monomial order $\mord$ is a finite set $G\subset I$ such that
  $\langle \lm_{\mord}(G) \rangle = \lm_{\mord}(I)$. A Gröbner basis is called
  \emph{reduced} if, for any $g\in G$, no monomial in the support of $g$ is
  divisible by any element in $\lm_{\mord}(G\setminus \{g\})$.
\end{definition}

A Gröbner basis $G$ of an ideal $I \subset R$ w.r.t.\ a monomial order
$\mord$ enables the computation of normal forms w.r.t.\ $I$ and
$\mord$ via a straightforward multivariate generalization of
polynomial long division, see e.g.~\cite[Table~5.1]{becker1993}. This,
in particular, yields an ideal membership test for $I$. Indeed, an
element $f\in R$ is contained in $I$ if and only if its normal form
w.r.t.\ $I$ and $\mord$ is zero. Finally, recall that reduced Gröbner
bases are unique for a given ideal and monomial order.

\myvspace{-1em}
\subsection{Points of Good Specialization}
\label{sec:pts}

We start by fixing some notation. For an element $f\in R$ we denote by
$\loc{R}{f}$ the localization of $R$ at the multiplicatively closed
set $\setof{f^k}{k\in \NN}$. For a prime ideal
$\prim\subset R$ we denote by $R_{\prim}$ the localization of
$R$ at the multiplicatively closed set $R\setminus \prim$.

We further fix a polynomial ring $\field[\xz]$ in two finite sets
of variables $\zz$ and $\xx$.  Let $I \subseteq \field[\xz]$ be an
ideal. Suppose that the map
\[\field[\zz]\rightarrow \field[\xz]/I\]
is injective and has \emph{generically finite fiber},
i.e.\ we assume that $\genI \coloneqq I\cdot\field(\zz)[\xx] \neq \field(\zz)[\xx]$ is
a zero-dimensional ideal.
\begin{definition}
  In this setting we call $\genI$ the \emph{generic
    fiber} of $I$.
\end{definition}
Let us introduce some further notation.

\begin{definition}
  \label{def:notation2}
  We denote for a monomial $u\in \mon(\zz)$
  \[\maxi_u\coloneqq \langle v\in \mon(\zz)\;|\;v \drlg u\rangle
    \text{ and }I_u\coloneqq I + \maxi_u,\]
$\maxi \coloneqq \maxi_1 = \langle \zz \rangle$, as well as
$\nxt(u)=\min\setof{v\in \mon(\zz)}{v \drlg u}$.
\end{definition}

\begin{definition}
  \label{def:notation1}
  Let $g\in \field[\zz]_{\maxi}[\xx]$. Write
  \[g = \sum_{w\in \mon(\xx)}\frac{p_w}{q_w}w\]
  with $p_w,q_w\in\field[\zz]$ and $q_w(0)\neq 0$ for all
  $w\in \mon(\xx)$ whenever $p_w\neq 0$. Then, each $p_w/q_w$ can be written
  as a formal power series
  \[\frac{p_w}{q_w} = \sum_{v\in \mon(\zz)}r_{w,v}v\in\field[\![\zz]\!]\]
  and for a monomial $u \in \mon(\zz)$ we denote
  \[g_u \coloneqq \sum_{w\in \mon(\xx)}\sum_{\substack{v\in\mon(\zz)\\v\drl u}}r_{w,v} v
    w = g\bmod\maxi_u\] For a set $G\subset \field[\zz]_{\maxi}[\xx]$ we
  define $G_u\coloneqq \setof{g_u}{g\in G}$.
\end{definition}

Let $G\subset \field(\zz)[\xx]$ be the reduced Gröbner basis of
$\genI$ w.r.t.\ a monomial order $\mordx$ on $\mon(\xx)$.  Our
algorithms will work under the assumption that
$G\subset \field[\zz]_{\maxi}[\xx]$ and that given the set $G_u$ we can lift
$G_u$ uniquely to $G_{\nxt(u)}$. In fact the condition
$G\subset \field[\zz]_{\maxi}[\xx]$ turns out to be sufficient.  We capture
this in a definition:

\begin{definition}
  \label{def:goodspec}
  We say that $\maxi$ is a \emph{point of good specialization} (for
  $\mordx$) if $G\subset \field[\zz]_{\maxi}[\xx]$.
\end{definition}

\begin{remark}
  By definition, being a point of good specialization is a
  Zariski-open condition, so that, if $\field$ is infinite, it is
  ensured with probability $1$ after replacing each $z_i\in \zz$ by
  $z_i-a_i$ for randomly chosen $a_i\in \field$. In
  \Cref{rem:probability} we point out a situation in which an upper
  bound for the probability that $\maxi$ is a point of good
  specialization can be given intrinsically in terms of $I$ if
  $\field$ is finite.
\end{remark}

First we show

\begin{theorem}
  \label{thm:pogs}
  If the ideal $\maxi$ is a point of good specialization, then the
  $\field[\zz]_{\maxi}$-module $\field[\zz]_{\maxi}[\xx]/I$ is free of
  finite rank.
\end{theorem}
\begin{proof}
  Write $A\coloneqq\field[\zz]_{\maxi}$,
  $K = \field(\zz)$ for the field of fractions of $A$ and
  $F\coloneqq A[\xx]/I$.

  Suppose that $\maxi$ is a point of good specialization so that
  $G\subset A[\xx]$. Let $S\coloneqq S_{\genI,\mordx}$, note that $S$ is
  finite. We first show that $S$ generates $F$ as an $A$-module. Let
  $u\in \mon(\xx)$ with $u\notin S$ so that $u\in \lm_{\mordx}(\genI)$. Then
  there exists $g\in G$ and $v\in \mon(\xx)$ such that
  $\lm_{\mordx}(vg) = u$ and therefore such that
  $\lm_{\mordx}(u-v g)\mordx u$.

  Reducing further the expression $u-vg$ by $G$ is done with
  arithmetic over $A$ only and hence shows that, in $F$, we can write
  $u = \sum_{s\in S}r_ss$ with $r_s\in A$. This shows that $S$ generates the
  $A$-module $F$. Now, to prove that $F$ is free over $A$, it suffices
  to show that there are no non-trivial $A$-relations between the
  elements of $S$. Suppose that for certain $s_1,\dots,s_t\in S$, there is a relation
  \[\sum_{i=1}^tr_is_i=0\]
  in $F$ with $r_i\in A\setminus \{0\}$ for all $i$. This gives in particular a
  relation between the $s_i$ over $K$. Hence, if $j$ is such that
  $s_j$ is $\mordx$-maximal among the $s_i$, then $s_j\in L$, but
  $L\cap S = \emptyset$, a contradiction.
\end{proof}

We now give some further properties regarding points of good
specializations. \Cref{enum:lifting:1} in the theorem below will be
used to show the correctness of our lifting algorithm in
\Cref{sec:alg} whereas \Cref{enum:lifting:3} will be used for our
complexity analysis in \Cref{sec:compl}.

\begin{theorem}
  \label{thm:lifting}
  Suppose that $\maxi$ is a point of good specialization.
  \begin{enumerate}
  \item\label{enum:lifting:1} For each $u\in \mon(\zz)$ and
    $z_i\in \zz$, the multiplication by $z_i$ induces an isomorphism of
    $\field$-vector spaces:
    \[u\field[\xz]/I_u\rightarrow z_i u\field[\xz]/I_{z_i u}.\]
  \item\label{enum:lifting:3} Let $\mord$ be the block order eliminating
    $\xx$ with $\mord = \mordx$ on $\mon(\xx)$ and
    $\mord = \drll$ on $\mon(\zz)$.  Let $M_u$ be the (unique) minimal
    generating set of $\maxi_u$. Then, the reduced $\mord$-Gröbner
    basis of $I_u$ is precisely $G_u \cup M_u$.
  \end{enumerate}
\end{theorem}
\begin{proof}
  We reuse the notation from the proof of \Cref{thm:pogs}. By
  \Cref{thm:pogs}, $\maxi$ being a point of good specialization
  implies that $F$ is a free $R$-module of finite rank.

  \emph{Proof of~(\ref{enum:lifting:1})}: It is first easy to check that now
  multiplication by $z_i$ induces a surjective, well-defined map of
  finite-dimensional $\field$-vector spaces
  \[u\field[\xz]/I_u\rightarrow z_i u\field[\xz]/I_{z_i u}.\] Note that the
  structure of $V_u\coloneqq u\field[\xz]/I_u$ as a vector space is
  induced by the canonical $A$-module structure of $F$, because
  $\maxi V_u=0$ and therefore $(V_u)_{\maxi} = V_u$. Hence, if
  $F\cong A^r$, we have,
  \[V_u\simeq (u A/\maxi_u)^r\simeq \field^r.\]
  Therefore multiplication by
  $z_i$ induces an epimorphism between vector spaces of the same
  dimension, so it must be an isomorphism.

  \emph{Proof of~(\ref{enum:lifting:3})}: Let
  $S\coloneqq S_{\genI, \mordx}$. It suffices to show that
  \[S_{I_u,\mord} =S_u \coloneqq \bigcup_{v\drl u}v S.\] Note that the set
  $S_{u}$ certainly generates $\field[\zz,\xx]/(I + \maxi_u)$ as a
  $\field$-vector space. As
  $\field[\zz,\xx]/(I + \maxi_u) \simeq F/\maxi_u$, and since $F$ is free,
  a $\field$-dimension count shows that the set $S_u$ is
  $\field$-linearly independent. Now, let $s\in S$ be
  $\mordx$-minimal such that there exists some $w\in \mon(\zz)$,
  $w\drl u$ with
  $w s \in \lm_{\mord}(I_u)$. By minimality, the $\mord$-normal form of
  $w s$ w.r.t.\ $I_u$ has support in
  $\bigcup_{v\drl u}\setof{v t}{t \mordx s, t\in S}\subset S_u$, therefore
  inducing a linear dependence between the elements of $S_u$, a
  contradiction.
\end{proof}

We will want to perform finite-dimensional linear algebra akin to the
\FGLM algorithm in certain staircases of the ideals $I_u$. This will
rely on the fact that $I_1$ is zero-dimensional.

\begin{corollary}
  \label{prop:zdim}
  Suppose $\maxi$ is a point of good specialization. Then for each
  $u\in \mon(\zz)$, the ideal $I_u$ is zero-dimensional.
\end{corollary}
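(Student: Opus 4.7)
The plan is to leverage the freeness result established in the proof of Theorem \ref{thm:lifting}. Writing $A = \field[\zz]_{\maxi}$ and $F = A[\xx]/I$, the hypothesis that $\maxi$ is a point of good specialization gives that $F$ is a free $A$-module, and this module is necessarily of finite rank $r$ since $\genI = F \otimes_A \field(\zz)$ is zero-dimensional.

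For the statement about $I_1 = I + \maxi_1 = I + \maxi$, the argument is then immediate from freeness: since $A/\maxi \cong \field$, we have $F/\maxi F \cong (A/\maxi)^r \cong \field^r$. I would confirm that the canonical map $\field[\xz]/I_1 \to F/\maxi F$ is an isomorphism by noting that every element of $\field[\zz] \setminus \maxi$ has nonzero constant term and is therefore already invertible modulo $\maxi$, so inverting before quotienting changes nothing. This yields $\dim_\field \field[\xz]/I_1 = r$, hence $I_1$ is zero-dimensional.

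For the claim about arbitrary $u \in \mon(\zz)$, I would appeal directly to Theorem \ref{thm:lifting}(\ref{enum:lifting:3}): its proof exhibits the $\ormord$-staircase of $I_u$ as $\bigcup_{v \drl u} v\, S$, where $S = S_{\genI, \mord}$ is finite thanks to the zero-dimensionality of $\genI$. Since only finitely many $\zz$-monomials satisfy $v \drl u$ (they are bounded in degree by $\deg u$ thanks to the $\drll$-component of the block order), the full staircase of $I_u$ is finite, and zero-dimensionality of $I_u$ follows.

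No real obstacle arises: the proof is essentially a bookkeeping corollary of Theorem \ref{thm:lifting}. The only point worth verifying carefully is the identification $\field[\xz]/I_1 \cong F/\maxi F$ across the localization in the first step, which collapses to the observation $A/\maxi = \field[\zz]/\maxi = \field$.
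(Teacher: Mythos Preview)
Your argument is correct, and it is a close cousin of the paper's, but the emphasis differs slightly. For $I_1$, the paper invokes Theorem~\ref{thm:lifting}(\ref{enum:lifting:3}) directly: since $G$ specializes to a Gröbner basis of $J = I + \maxi$, the staircase $S_{J,\targmord}$ coincides with the finite set $S_{\genI,\targmord}$, and zero-dimensionality follows. You instead bypass Gröbner bases and argue module-theoretically, reading off $\dim_\field F/\maxi F = r$ from freeness and checking that localization at $\maxi$ is harmless after reducing modulo $\maxi$. Both routes are one line away from the proof of Theorem~\ref{thm:lifting}; yours is perhaps more self-contained, while the paper's makes the connection to the staircase explicit. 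For general $I_u$, the paper simply writes ``Consequently'' without spelling out the mechanism, whereas you give the argument via Theorem~\ref{thm:lifting}(\ref{enum:lifting:3}) and the finiteness of $\{v \drl u\}$; this is exactly the right justification and arguably what the paper intends the reader to supply.
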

\begin{proof}
  Note that \Cref{enum:lifting:3} in \Cref{thm:lifting} implies that a
  $\field$-basis of $\field[\xz]/I_u$ is given by
  $\bigcup_{v\drl u}v S_{\genI,\mordx}$ which is a finite set.
\end{proof}

\section{The Main Algorithm}
\label{sec:alg}
As in the last section, we fix an ideal $I\subset\field[\xz]$ with
generically finite fiber over $\field[\zz]$, we will use the notation
introduced in \Cref{def:notation2} and \Cref{def:notation1}. We now
further fix any monomial order $\targmord$ on $\mon(\xx)$ and another
monomial order $\ormord$ on $\mon(\xx\cup \zz)$. We suppose that
$\maxi$ is a point of good specialization for $\targmord$.

Suppose that we can compute, with some black box, the reduced
$\ormord$-Gröbner basis $H_u$ of $I_u$ for any $u\in \mon(\zz)$. Our
goal is to use this data to compute the reduced $\targmord$-Gröbner
basis $G$ of $\genI$.

\begin{remark}
  \label{rem:mis}
  Note that we have so far required that the partition of the
  variables of $\field[\xz]$ is given. It can be computationally
  determined: From any Gröbner basis of $I$ we can determine $\zz$ as
  a \emph{maximally independent set} of $I$ and let $\xx$ be the set
  of remaining variables,
  see~\cite[Definition~9.22]{becker1993}. Then, as in the last
  section, the map $\field[\zz]\rightarrow \field[\xz]/I$ is injective with
  generically finite fiber, see~\cite[Corollary~9.28]{becker1993}.
\end{remark}

Let us sketch our strategy. By the assumption that $\maxi$ is a point
of good specialization, we have $G\subset \field[\zz]_{\maxi}[\xx]$.  Recall
that $\maxi_1=\maxi$. We start by computing the $\ormord$-Gröbner
basis $H_1$ of $I_1=I + \maxi$. Then, we run the \FGLM algorithm
\cite{faugere1993a} with $H_1$ to obtain the reduced
$\targmord$-Gröbner basis of the image of $I$ in
$(\field[\zz]/\maxi)[\xx]\simeq \field[\xx]$. By \Cref{thm:lifting} this
Gröbner basis will now precisely be the set $G_1$ in the notation
introduced in \Cref{def:notation1}.

For a monomial $u\in \mon(\zz)$, let $v \coloneqq \nxt(u)$. Starting
with $u = 1$ and a given $g_1\in G_1$, we will lift $g_u$ to $g_v$ by
performing linear algebra in the finite-dimensional, see \Cref{prop:zdim},
$\field$-vector space $v\field[\xz]/I_{v}$, using the
$\ormord$-Gröbner basis $H_v$. This will rely on \Cref{enum:lifting:1}
in \Cref{thm:lifting}.
\begin{remark}
  \label{rem:tracer}
  In this section we treat the computation of the required Gröbner
  bases $H_u$, $u\in \mon(\zz)$ as a black box. We recall in 
  \Cref{sec:compl} that these sets may be obtained free of arithmetic
  operations from an $\ormord$-Gröbner basis of $I$ when
  $\ormord=\drll$ and $I$ satisfies a certain genericity
  assumption. Under the assumption that $\ormord$ is a suitable block
  order and that $\maxi$ is also a point of good specialization for
  $\ormord$ restricted to $\mon(\xx)$ one can give a {\em
    tracer-based} \cite{traverso1989a} method to compute the sets
  $H_u$.  This will be the subject of a future paper.
\end{remark}

This lifting step is now given by \Cref{alg:lift}.

\begin{algorithm}[]
  \caption{The Lifting Algorithm}
  \label{alg:lift}
  \raggedright

  \begin{description}
  \item[Input] A monomial $u\in \mon(\zz)$, $g_u\in G_u$,
    $v \coloneqq \nxt(u)$, the reduced Gröbner basis $H_v$ of $I_v$
    w.r.t.\ $\ormord$, the set $S_{I_1,\targmord}$.
  \item[Output] The corresponding element $g_v\in G_{v}$.
  \end{description}

  \begin{pseudo}
    \kw{function} \fn{lift}(g_u,H_v,S_{\targmord})\\+
    $c\gets \NF_{I_v,\ormord}(g_u)$ \ct{via $H_v$}\\
    \kw{if} $c = 0$
    \kw{return} $g_u$\\
    \kw{else}\\+
    find $(\alpha_w)_{w\in S_{I_1,\targmord}}$ s.t.\
    $c =\!\!\!\! \sum\limits_{w\in S_{I_1,\targmord}}\!\!\!\!\!\!\alpha_w \NF_{I_v,\ormord}(u w)$
    \ct{via $H_v$}\label{line:lift:vectors}\\
    \kw{return} $g_u-\sum_{w\in S_{I_1,\targmord}}\alpha_w u w$\\
  \end{pseudo}
\end{algorithm}

\begin{theorem}
  \label{thm:corr1}
  If $\maxi$ is a point of good specialization for $\targmord$, then
  \Cref{alg:lift} terminates and is correct in that it satisfies its
  output specification.
\end{theorem}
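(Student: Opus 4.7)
My plan is to interpret \Cref{alg:lift} as recovering, in one linear-algebra step, exactly the ``new'' coefficients---those attached to the $\zz$-monomial $v\coloneqq\nxt(u)$---in the power-series expansion of the target Gröbner basis element $g\in G$ of which $g_u$ is the image modulo $\maxi_u$. Writing $g=\sum_{w\in\mon(\xx)}\sum_{v'\in\mon(\zz)}r_{w,v'}v'w$ as in \Cref{sec:pts}, the definition of $\nxt(u)$ implies that $v$ is the unique $\zz$-monomial in $\{m:m\drl v\}\setminus\{m:m\drl u\}$, so the truncation formulas for $g_u,g_v$ yield immediately
\[g_v-g_u=\sum_{w\in\mon(\xx)}r_{w,v}\,vw.\]

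I would then restrict this sum to $w\in S_{\targmord}$. Because $g$ is an element of the reduced $\targmord$-Gröbner basis of $\genI$ with leading coefficient $1\in\field$, its $\xx$-support is contained in $\{\lm_{\targmord}(g)\}\cup S_{\targmord}$ and $r_{\lm_{\targmord}(g),v'}=0$ for every $v'\neq 1$; since $v\neq 1$, the leading monomial contributes nothing. I would next use \Cref{thm:lifting}(\ref{enum:lifting:3}) twice. First, $g_v$ lies in the reduced $\ormord$-Gröbner basis $G_v\cup M_v$ of $I_v$, hence $c_{I_v,\ormord}(g_v)=0$ and
\[c_{I_v,\ormord}(g_u)=-\sum_{w\in S_{\targmord}}r_{w,v}\,c_{I_v,\ormord}(vw),\]
which establishes existence of the decomposition in line~\ref{line:lift:vectors}. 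Second, reading the $\ormord$-staircase of $I_v$ off from $G_v\cup M_v$ gives the explicit description $\{m_{\zz}w_{\xx}:m_{\zz}\drl v,\,w_{\xx}\in S_{\targmord}\}$, so each $vw$ with $w\in S_{\targmord}$ is already in that staircase and the vectors $c_{I_v,\ormord}(vw)$ are pairwise distinct standard basis vectors, in particular $\field$-linearly independent. This forces $\alpha_w=-r_{w,v}$, so the algorithm returns $g_u+\sum_{w\in S_{\targmord}}r_{w,v}\,vw=g_v$. The early-exit $c=0$ branch is covered by the same linear independence: $c=0$ forces every $r_{w,v}$ to vanish, giving $g_v=g_u$.

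Termination is immediate since the function body performs one normal-form reduction against the finite set $H_v$ followed by a linear solve over $\field$ of size at most $|S_{\targmord}|$. I expect the main obstacle to be the restriction of the sum $\sum_{w\in\mon(\xx)}r_{w,v}vw$ to $w\in S_{\targmord}$ combined with the explicit description of the $\ormord$-staircase of $I_v$: these two structural facts, supplied by the reducedness of $G$ and \Cref{thm:lifting}(\ref{enum:lifting:3}), are precisely what ensures the linear system in line~\ref{line:lift:vectors} is square and invertible. Everything else reduces to routine finite-dimensional linear algebra.
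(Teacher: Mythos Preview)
Your overall strategy matches the paper's: establish existence of a solution to the linear system in line~\ref{line:lift:vectors} using the known element $g\in G$, and establish uniqueness via linear independence of the normal-form vectors. Your existence argument is correct and more explicit than the paper's. However, your justification of linear independence contains a genuine error stemming from a clash of notation between \Cref{sec:pts} and \Cref{sec:alg}.

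You claim that the $\ormord$-staircase of $I_v$ equals $\{m_{\zz}w_{\xx} : m_{\zz}\drl v,\ w_{\xx}\in S_{\targmord}\}$, read off from $G_v\cup M_v$. This conflates two different monomial orders. In \Cref{sec:alg} the set $G$ is the reduced $\targmord$-Gröbner basis of $\genI$, while $\ormord$ denotes a \emph{different} order---the one for which $H_v$ is a Gröbner basis of $I_v$. When you invoke \Cref{thm:lifting}(\ref{enum:lifting:3}) with the ambient order set to $\targmord$, the symbol ``$\ormord$'' appearing in that statement is the block order with $\targmord$ on the $\xx$-block; it is \emph{not} the $\ormord$ of \Cref{sec:alg}. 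Hence $G_v\cup M_v$ is a Gröbner basis for the $\targmord$-block order, and the set you wrote down is the $\targmord$-block-order staircase, not the $\ormord$-staircase. For a general $\ormord$ the monomial $vw$ with $w\in S_{\targmord}$ need not lie in $S_{I_v,\ormord}$, so the vectors $c_{I_v,\ormord}(vw)$ are not standard basis vectors.

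The repair is short and keeps your argument intact. The monomials $\{vw : w\in S_{\targmord}\}$ form part of the $\targmord$-block-order staircase of $I_v$ (this is what \Cref{thm:lifting}(\ref{enum:lifting:3}) actually gives), hence they are $\field$-linearly independent modulo $I_v$. Since $f\mapsto c_{I_v,\ormord}(f)$ is a $\field$-linear isomorphism from $\field[\xz]/I_v$ onto $\field^{\dim_{\field}\field[\xz]/I_v}$, the images $c_{I_v,\ormord}(vw)$ remain linearly independent. Combined with your (correct) existence argument, this yields uniqueness of the $\alpha_w$ and hence the desired output $g_v$.
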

\begin{proof}
  We use the notation from the pseudocode of the algorithm. The
  termination of the algorithm is clear. For the correctness of the
  algorithm, note that the vectors $\NF_{I_v,\ormord}(u w)$ in 
  \cref{line:lift:vectors} are linearly independent thanks to
  \Cref{enum:lifting:1} of 
  \Cref{thm:lifting}. Thus, there exists at most one choice of
  coefficients $\alpha_w,w\in S_{I_1,\targmord}$, such that
  $c = \sum_{w\in S_{I_1,\targmord}}\alpha_w \NF_{I_v,\ormord}(u w)$.
  Furthermore, since $\maxi$ is a point of good specialization, the
  element $g\in G$ corresponding to $g_u$ provides such a choice of
  coefficients, implying that there exists at least one solution to
  this linear system. This proves the correctness.
\end{proof}
\begin{remark}
  We want to emphasize that our algorithms never verify
  deterministically (and cannot verify deterministically) whether
  $\maxi$ is a point of good specialization, this is a probabilistic
  assumption. Nonetheless, running \Cref{alg:lift} can sometimes
  detect when $\maxi$ is not a point of good specialization, namely if
  there exists no or more than one solution to the linear system in
  \cref{line:lift:vectors} of \Cref{alg:lift}. In this case one would apply a random
  change of coordinates $z_i\gets z_i-a_i$ for each $z_i\in \zz$ and restart
  the computation.
\end{remark}

\begin{example}
  \label{ex:cyclic}
  Let us unroll \Cref{alg:lift} o,n the following example.
  We work over the polynomial ring $\FF_{11}[z,x_1,x_2,x_3]$, where
  $\FF_{11}$ is the finite field with eleven elements. Our ideal $I$ is generated
  by
  \begin{align*}
    &(z+8) + x_1 + x_2 + x_3,\
    (z+8)x_1 + x_1x_2 + x_2x_3 + x_3(z+8),\\
    &(z+8)x_1x_2 + x_1x_2x_3 + x_2x_3(z+8) + x_3(z+8)x_1,\!
    (z+8)x_1x_2x_3 - 1.
  \end{align*}
  Following \Cref{rem:mis}, one verifies that
  $\FF_{11}[z]\rightarrow\FF_{11}[z,x_1,x_2,x_3]/I$ is injective with generically
  finite fiber.

  Readers may recognize this example as the Cyclic~4 polynomial system
  where we have replaced the variable $z$ by the random choice $z+8$
  to ensure, probabilistically, that $\maxi = \langle z \rangle$ is a point of
  good specialization.

  For our orders we choose $\ormord = \drll$ on $\mon(\xx\cup \{z\})$ and
  $\targmord$ as the lexicographic order on $\mon(\xx)$. Now, the set
  $G_1$ is given by
  \[G_1\coloneqq \{x_3^2+6, x_2+8,x_1+x_3\}.\] Hence
  $S_{I_1,\targmord} = \{1,x_3\}$. Assuming that $g_1\coloneqq x_3^2+6$ is the
  image of some element $g$ in the target Gröbner basis
  $G\subset \FF_{11}(z)[\xx]$, we now try to lift $g_1$ to $g_z$, i.e. the
  image of $g$ modulo $\maxi_z = \langle z^2\rangle$, so that, in the notation of
  \Cref{alg:lift}, we have $u = 1$ and $v = z$.

  If such a $g$ exists, there must now exist, by \Cref{enum:lifting:1}
  in \Cref{thm:lifting}, unique scalars $\alpha_1,\alpha_{x_3}\in \FF_{11}$ such that
  \[g_z = g_1 + \alpha_1z + \alpha_{x_3}zx_3 = 0 \mod I_z = I+\langle z^2\rangle,\] and
  \Cref{alg:lift} attempts to compute these scalars by finding a
  linear relation between the normal forms w.r.t. $\ormord$ of
  $g_1,z$ and $zx_3$ modulo $I_z$. Using 
  an
  $\ormord$-Gröbner
  basis of $I_z$, we find that $S_{I_z,\ormord}=\{1,z,x_3,zx_3\}$ and
  we compute, using normal form computations
  \begin{align*}
    \NF_{I_2,\ormord}(g_1) &= (0,7,0,0)\\
    \NF_{I_2,\ormord}(z) &= (0,1,0,0)\\
    \NF_{I_2,\ormord}(zx_3) &= (0,0,0,1)
  \end{align*}
  so that finally, $\alpha_1 = 6$ and $\alpha_{x_3} = 0$ which yields for
  $g_z$ the unique candidate
  $g_{z} = x_3^2 + (4z + 6)$,
  finishing the example.
\end{example}

\Cref{alg:lift} is only able to compute the set $G_u$ for a monomial
$u\in \mon(\zz)$, i.e.\ it ``approximates'' the set $G$ up to order
$u$. A natural question is then how to extract the actual set $G$ out
of $G_u$. For this, we may use the classical technique of Padé
approximants. Having computed the set $G_u$, we have computed the
image $g_u$ of a given element $g\in G$ as
\[ g_u = \sum_{w\in \mon(\xx)}\sum_{v\drl u}r_{w,v} v w.\]
Now we have for the coefficient $p_w/q_w\in\field(\zz)$ of $w$ in $g$
\begin{equation}
  \label{eq:pade}
  p_w - q_w\sum_{v\drl u}r_{w,v}v = 0\bmod \maxi_u,
\end{equation}
which determines a set of linear equations in the unknown coefficients
of $p_w$ and $q_w$. Let $d\coloneqq\deg u$. Suppose that
$\deg \nxt(u) = d + 1$, so that $\maxi_u= \maxi^{d+1}$. Fix $d_1$ and
$d_2$ with $d_1+d_2=d$ and let $n$ be the cardinality of the set
$\zz$. If we impose that $\deg p_w\leq d_1$ and $\deg q_w\leq d_2$, then the
linear system (\ref{eq:pade}) has a finite set of unknowns and
equations. Let us say that any solution to this linear system of
equations is a \emph{Padé approximant of order $(d_1,d_2)$ of
  $\lambda_w\coloneqq\sum_{\deg v<d+1}r_{w,v}v$}. If $d_1$ and $d_2$ are large
enough then any Padé approximant of order $(d_1,d_2)$ of $\lambda_w$ is equal to
$p_w/q_w$, see e.g.~\cite[Proposition 2.1]{guillaume2000}:

\begin{lemma}
  \label{lem:pade}
  Let $p/q$ be a Padé approximant of order $(d_1,d_2)$ of
  $\lambda_w$. If $d_1\geq \deg p_w$ and $d_2\ge \deg q_w$ then
  $p/q = p_w/q_w$.
\end{lemma}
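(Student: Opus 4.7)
The strategy is to establish the polynomial identity $q\,p_w=q_w\,p$ in $\field[\zz]$; once this holds, the fact that $q_w(0)\neq 0$ (so in particular $q_w\neq 0$) and that $q\neq 0$ (any nonzero Padé approximant, after the usual normalization $q(0)=1$) immediately upgrades the identity to $p/q=p_w/q_w$ in $\field(\zz)$.

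First, I would write down the two congruences mod $\maxi^{d+1}$ that are available. By construction, $\lambda_w$ is the truncation at order $d+1$ of the power series expansion of $p_w/q_w$ about the origin, so clearing denominators in that expansion gives
\[ p_w - q_w\,\lambda_w \equiv 0 \pmod{\maxi^{d+1}}, \]
and by the definition of a Padé approximant of order $d$ of $\lambda_w$ we also have
\[ p - q\,\lambda_w \equiv 0 \pmod{\maxi^{d+1}}. \]
Multiplying the first by $q$ and the second by $q_w$, then subtracting to eliminate the $\lambda_w$ contribution, yields
\[ q\,p_w - q_w\,p \equiv 0 \pmod{\maxi^{d+1}}. \]

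Finally, I would conclude by a pure degree count. Under the hypotheses $d_1\geq \deg p_w$ and $d_2\geq \deg q_w$, and using the Padé-approximant degree bounds $\deg p\leq d_1$ and $\deg q\leq d_2$, both products $q\,p_w$ and $q_w\,p$ have total degree at most $d_1+d_2=d$. Hence $q\,p_w - q_w\,p$ is a polynomial of degree at most $d$ lying in $\maxi^{d+1}$; but $\maxi^{d+1}$ consists of polynomials whose nonzero monomials all have degree at least $d+1$, so the only such polynomial is $0$. This gives $q\,p_w=q_w\,p$ in $\field[\zz]$ and finishes the proof.

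The argument is essentially a degree count with no real obstacle. The one point that deserves a line of care is the assertion that a polynomial of degree $\leq d$ in $\maxi^{d+1}$ vanishes, which is immediate from the definition of $\maxi^{d+1}$ as the ideal spanned by monomials of degree $\geq d+1$, together with the fact that $\deg \nxt(u)=d+1$ so $\maxi_u=\maxi^{d+1}$ is exactly the hypothesis being used.
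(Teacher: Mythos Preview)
Your argument is correct and is precisely the standard cross-multiplication proof of uniqueness for Padé approximants: derive $q\,p_w\equiv q_w\,p\pmod{\maxi^{d+1}}$ and then kill the congruence by the degree bound $\deg(q\,p_w-q_w\,p)\le d_1+d_2=d$. The only cosmetic point is that the paper's definition of a Padé approximant does not explicitly impose $q(0)=1$; but your degree count already shows that $q=0$ forces $p=0$ (since then $p\in\maxi^{d+1}$ with $\deg p\le d_1\le d$), so any \emph{nontrivial} solution has $q\neq 0$, and the identity $q\,p_w=q_w\,p$ in the integral domain $\field[\zz]$ then gives $p/q=p_w/q_w$ without needing the normalization.

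As for the comparison with the paper: the paper does not supply its own proof of this lemma at all; it simply cites \cite[Proposition~2.1]{guillaume2000}. Your write-up therefore fills in exactly what the paper defers to the literature, via the same classical route.
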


By solving this linear system we obtain an algorithm
$\fn{pade}(g_u,d_1,d_2)$ which computes a candidate
$g_{\cand}\in \field(\zz)[\xx]$ whose coefficients are Padé
approximants of the coefficients of $g_u$ of order $(d_1,d_2)$ regarded
as a polynomial in the variables $\xx$.  Let us say that
$g_u$ has \emph{stable Padé approximation} if for
$v \coloneqq \nxt(u)$ we have
\[g_{\cand} = g_v \bmod \maxi_v.\]
Based on this, we now obtain
\Cref{alg:genfglm} for computing the set $G$ probabilistically. We
state this algorithm in an informal way. In 
\Cref{line:genfglm:lift} by ``lifting $G_{\lift}$ to degree $d$'' we
mean that we compute the set $G_u$ where $u$ is
the $\drll$-maximal monomial of degree $d$.

\begin{algorithm}[]
  \caption{Computing the generic fiber}
  \label{alg:genfglm}
  \raggedright

  \begin{description}
  \item[Input] A generating set $F$ of $I$, a monomial order
    $\ormord$, a monomial order $\targmord$.
  \item[Output] A guess for the set $G$.
  \end{description}

  \begin{pseudo}
    \kw{function} \fn{genfglm}(F,\ormord,\targmord)\\+
    $H_1\gets $ reduced $\ormord$-Gröbner basis of $I_{1}$ \ct{using $F$}\\
    $G_{\lift}, S_{\targmord}\gets \fn{fglm}(H_1,\targmord)$\\
    $G_{\result}\gets \emptyset$\\
    $d \gets 2$\\
    \kw{while} $G_{\lift}\neq \emptyset$\\+
    $G_{\lift} \gets $ lift $G_{\lift}$ to degree $d$
    using 
    \Cref{alg:lift} \label{line:genfglm:lift}\\
    Run $\fn{pade}(g,d/2,d/2)$ for all $g\in G_{\lift}$ \label{line:genfglm:pade}\\
    Lift $G_{\lift}$ one monomial higher  \label{line:genfglm:lift2}\\
    add to $G_{\result}$ all elements with stable Padé approx.\\
    remove the corresponding elements from $G_{\lift}$\\
    $d\gets 2d$\\-
    \kw{return} $G_{\result}$\\
  \end{pseudo}
\end{algorithm}

Clearly, by \Cref{thm:corr1} and \Cref{lem:pade}, this algorithm
returns the correct result if the computed Padé approximants are of
sufficiently large degree and $\maxi$ is a point of good
specialization.

\begin{remark}
  Note that 
  \Cref{alg:lift} works also if we replace $v$ by
  any monomial larger than $u$: In this case we just have to write $c$
  as a linear combination of all the vectors $c_{I_v,\ormord}(u v')$
  where $u \drll v' \drl v$.
\end{remark}

\begin{example}[\Cref{ex:cyclic} continued]
  \label{ex:cycliccont}
  Let us try to see how \Cref{alg:genfglm} recovers the element
  denoted $g$ in \Cref{ex:cyclic}. First, \Cref{alg:genfglm} lifts the
  element $g_1=x_3^2+6$ to degree $d=2$, i.e. we compute the image of
  $g$ modulo $z^3$ (\cref{line:genfglm:lift} 
  in \Cref{alg:genfglm}). This yields, in our
  usual notation,
  \[g_2 = x_3^2 + (2z^2+4z+6).\] Now we attempt
  (\cref{line:genfglm:pade} 
  in \Cref{alg:genfglm}) to find a Padé approximation of order $(1,1)$
  for $g$, i.e., here, $p,q\in \FF_{11}[z]$ of degree at most one such that
  $p/q = 2z^2 + 4z + 6\bmod z^3$ by solving a linear system as outlined
  above.  This yields the candidate
  \[g_{\cand} = x_3^2 + \frac{z+6}{5z+1}\]
  which satisfies $g_{\cand} = g_2\bmod z^3$.
  Next, in 
  \cref{line:genfglm:lift2},
  we lift $g$ one
  monomial higher, i.e. modulo $z^4$. This yields
  \[g_3 = x_3^2 + (7z^3+2z^2+4z+6),\] But $p/q$ has now the truncated
  power series $z^3+2z^2+4z+6$, so that $g_2$ does not have stable Padé
  approximation. Hence we double $d$ to $4$ and lift $g_3$ to $g_4$,
  i.e. from modulo $z^4$ to modulo $z^5$, and attempt another Padé
  approximation. This time, computing a Padé approximation of order
  $(2,2)$, this yields the
  candidate
  \[g_{\cand} = x_3^2 + \frac{1}{10z^2+6z+2}.\]
  Finally,
  we lift $g_4$ to $g_5$ and find that
  $g_5 = g_{\cand} \bmod z^6$.
  So $g_4$ has stable Padé
  approximation and we terminate with $g\coloneqq g_{\cand}$. Computing
  the $\targmord$-Gröbner basis $G$ of $\genI$ using block orders as
  in \cite[Lemma 8.93]{becker1994} shows that $g$ is indeed the
  correct element.
\end{example}

\section{Complexity Estimates}
\label{sec:compl}

In this section, we analyze the arithmetic complexity of a version of
our algorithm more akin to the original \FGLM algorithm as presented
in~\cite{faugere1993a}. We will reuse the notation from the last
section. We now add the additional assumption that $\ormord$ is a
block order eliminating $\xx$ with $\ormord=\drll$ on $\mon(\zz)$ and
that $\maxi$ is a point of good specialization for both $\ormord$ and
$\targmord$. Here, we analyze the number of arithmetic operations in
$\field$ required to obtain the sought $\targmord$-Gröbner basis $G$
using the same strategy as in \Cref{alg:genfglm}, but with a more
optimized lifting step.

Our cost analysis will require measuring the cost of performing
certain linear algebra operations on structured matrices. The matrices
that will appear in the analysis are \emph{block-Toeplitz}:

\begin{definition}
  \label{def:bltoep}
  Let $k,D\in \NN$. A block matrix
  $M= (M_{p q})_{0\leq p,q < k}\in \field^{k D\times k D}$ where each
  $M_{p q}$ is in $\field^{D\times D}$ is called \emph{block-Toeplitz of
  type $(k,D)$} if $M_{p q} = M_{p' q'}$ whenever $p-q = p'-q'$. We say that
  $M$ is \emph{Toeplitz} when $D = 1$.
\end{definition}

If $M$ is block-Toeplitz of type $(k,D)$ then we will need the
arithmetic complexity of computing a matrix vector product $M v$,
$v\in \field^{k D}$, and of inverting $M$. For this we need the concept
of \emph{displacement rank} of a matrix, see
e.g.~\cite{bostan2017a}:

\begin{definition}
  Let $Z\in \field^{n\times n}$ be the matrix defined by
  \[Z \coloneqq 
    (\delta_{i-1,j})_{1\leq i, j\leq n},\]
  where
  $\delta_{i-1,j}$ is the Kronecker delta and let $Z^{\rT}$ be the transpose of $Z$.
  The \emph{displacement rank} of a
  matrix $M\in \field^{n\times n}$ is
  \[\alpha(M)\coloneqq \rk(M - Z M Z^{\rT}).\]
\end{definition}

Note that the displacement rank of a Toeplitz matrix is upper bounded
by $2$. The concept of displacement rank can be used as a general
method to utilize ``Toeplitz-like'' structures in algorithmic linear
algebra. In this vein, we have

\begin{proposition}
  \label{prop:toeplitzcompl}
  Let $M$ be block-Toeplitz of type $(k,D)$ and let $v\in \field^{k D}$.
  Then
  \begin{enumerate}
  \item $M v$ can be computed in $\qcompl{k D^2}$ arithmetic operations
    in $\field$;
  \item $M$ can be inverted in $\qcompl{k D^{\omega}}$.
  \end{enumerate}
\end{proposition}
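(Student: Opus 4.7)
My plan is to reduce both statements to classical complexity facts about polynomial-matrix arithmetic and Toeplitz-like structured linear algebra. Denote by $M_\ell \in \field^{D \times D}$ the common value of the blocks $M_{pq}$ with $p - q = \ell$, so that $M$ is specified by the $2k - 1$ blocks $M_{-(k-1)}, \ldots, M_{k-1}$.

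For~(1), I would encode $M$ as the polynomial matrix
\[\mathcal{M}(X) \coloneqq \sum_{\ell = -(k-1)}^{k-1} M_\ell\, X^{\ell + k - 1} \in \field[X]^{D \times D}\]
of degree at most $2k - 2$, and $v \in \field^{kD}$, viewed as $k$ blocks $v_0, \ldots, v_{k-1} \in \field^D$, as the polynomial vector $V(X) \coloneqq \sum_{q = 0}^{k-1} v_q\, X^q \in \field[X]^D$. A direct coefficient extraction then shows that the $p$-th block of $Mv$ is the coefficient of $X^{p + k - 1}$ in $\mathcal{M}(X) V(X)$, so computing $Mv$ amounts to one polynomial matrix by polynomial vector product in degree $O(k)$, i.e., to $D^2$ scalar polynomial multiplications of degree $O(k)$, each in $\qcompl{k}$ via FFT-based multiplication. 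Summing yields the bound $\qcompl{k D^2}$.

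For~(2), I would apply the displacement-rank framework but with the block shift operator $\hat{Z}_D$ of step $D$ in place of the scalar shift $Z$ of \Cref{def:bltoep}. A direct calculation shows that, for block-Toeplitz $M$, every interior block of $M - \hat{Z}_D M \hat{Z}_D^{\rT}$ equals $M_{pq} - M_{p-1, q-1} = 0$, so only the first block-row and the first block-column can be nonzero; hence its rank is at most $2D$. The classical inversion algorithms for matrices of small (block) displacement rank, presented for example in~\parencite{bostan2017a}, then invert an $N \times N$ matrix of displacement rank $\alpha$ in $\qcompl{N \alpha^{\omega - 1}}$ operations. Specializing to $N = kD$ and $\alpha = 2D$ yields the stated $\qcompl{k D^{\omega}}$.

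The main obstacle is the inversion step: the matrix-vector product reduces cleanly to fast polynomial multiplication, but the inversion requires swapping the scalar displacement operator in \Cref{def:bltoep} for its block analogue, verifying the rank bound in that block setting, and quoting the correct instance of the Toeplitz-like inversion theorem. Should a direct block-Toeplitz statement not be immediately available off-the-shelf, a reasonable fallback is to reduce block-Toeplitz inversion to truncated inversion of a $D \times D$ matrix power series to precision $O(k)$, for which Newton iteration yields the same $\qcompl{k D^{\omega}}$ bound, dominated by the cost of the final doubling step.
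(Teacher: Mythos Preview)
Your argument for~(1) via the polynomial-matrix encoding is correct and is a genuinely different, more elementary route than the paper's. The paper handles both items uniformly through displacement rank: it applies a perfect-shuffle permutation (row $pD+i \mapsto ik+p$, column $qD+j \mapsto jk+q$) to turn $M$ into a matrix $N$ consisting of $D^2$ scalar $k\times k$ Toeplitz blocks, checks directly that $N - Z N Z^{\rT}$ has rank at most $2D$ for the \emph{scalar} shift $Z$, and then quotes the off-the-shelf bounds $\qcompl{\alpha n}$ for matrix-vector products and $\qcompl{\alpha^{\omega-1}n}$ for inversion of an $n\times n$ matrix of displacement rank $\alpha$. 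Your FFT-based product is arguably cleaner for~(1) since it bypasses both the permutation and the structured-matrix machinery.

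For~(2) there is a genuine gap. The displacement operator in the paper's definition, and the one for which the cited $\qcompl{N\alpha^{\omega-1}}$ inversion bound is stated, is the scalar shift $Z$, not your block shift $\hat Z_D$. Your rank-$2D$ computation for $M - \hat Z_D M \hat Z_D^{\rT}$ is correct, but it does not directly feed into that theorem. What is missing is precisely the permutation the paper uses: this shuffle converts $\hat Z_D$-displacement on $M$ into $Z$-displacement on the permuted matrix $N$, after which the standard Toeplitz-like inversion result applies verbatim. Your proposed fallback does not close the gap either: Newton iteration on a truncated $D\times D$ matrix power series inverse only handles block-\emph{triangular}-Toeplitz matrices (those with $M_\ell = 0$ for $\ell < 0$); general block-Toeplitz inversion goes through matrix Pad\'e approximation or a Morf/Bitmead--Anderson divide-and-conquer, not a plain truncated inverse. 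So either insert the shuffle permutation and invoke the scalar-displacement theorem, as the paper does, or cite explicitly a block-displacement inversion result with the required complexity.
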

\begin{proof}
  For any matrix $M\in \field^{n\times n}$, according to~\cite{bostan2017a},
  a matrix-vector product $M v$ can be computed in time
  $\qcompl{\alpha(M)n}$ and $M$ can be inverted in time
  $\qcompl{\alpha(M)^{\omega - 1}n}$. Using a series of rows and column
  swaps, more precisely sending row $p D + i$ to $i k + p$ (resp.\
  column $q D + j$ to $j k + q$), we may transform a block-Toeplitz
  matrix $M$ of type $(k,D)$ into a matrix
  $N = (N_{i j})_{0\leq i,j<D} \in \field^{k D\times k D}$ where each
  $N_{i j}$ lies in $\field^{k\times k}$ and is Toeplitz. Now,
  $N - Z N Z^{\rT}$ has $D$ dense columns and $(k-1)D$ columns with
  potentially nonzero coefficients in positions $i D$ for all
  $i$. Only $D$ of these latter columns can be linearly independent so
  that $\alpha(M)\leq 2 D$, proving both claims.
\end{proof}

Our cost analysis follows closely the one of the original \FGLM
algorithm. To this end, we give the following definition:

\begin{definition}[Multiplication Tensor]
  Let $I$ be a zero-dimensional ideal in a polynomial ring
  $\field[\xx]$ and let $\mord$ be a monomial order. Let
  $S\coloneqq S_{I,\mord}$. The multiplication tensor of $I$ w.r.t.\ $\mord$ is
  defined as the $3$-tensor
  \[M(I,\mord) = (\NF_{I,\mord}(x_i u))_{x_i\in \xx,u\in S},\]
  where the vectors of coefficients are in the basis $S$.
\end{definition}

It turns out that computing these multiplication tensors dominates the
cost of the original \FGLM algorithm and similarly it dominates the
cost of our algorithm. In this section, we denote by $D$ the degree of
$\genI$, i.e.\ the $\field(z)$-dimension of
$\field(z)[\xx]/\genI$. Note that this degree is upper-bounded by that
of $I$. For $u\in \mon(\xx)$ $\drll$-maximal of degree $k$, we also
denote $I_k\coloneqq I_{u}$ and similarly $H_k\coloneqq H_{u}$ and $G_k\coloneqq G_{u}$ with
these sets defined as in the last section.

To simplify the notation, we assume in the following two proofs, that
the set $\zz = \{z\}$ consists of a single variable. It will be clear
from the proofs that they translate accordingly to the more general
setting where $\zz$ consists of several variables.

In our assumed setting, we obtain the following statement for
computing multiplication tensors:

\begin{theorem}
  \label{thm:lifttensor} Let $k\in \NN$ and $\mondeg{k}$ be the number of
  monomials in $\zz$ up to degree $k$. Let $c$ be the cardinality of
  $\xx$. Suppose that we are given the set $H_{2k}$ and the
  multiplication tensor of $I_{k}$ w.r.t.\ $\ormord$. Then the
  multiplication tensor of $I_{2k}$ w.r.t.\ $\ormord$ is computed
  in arithmetic complexity $\qcompl{\mondeg{k} c D^3}$.
\end{theorem}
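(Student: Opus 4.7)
The plan is to exploit the block-Toeplitz structure induced by Theorem~\ref{thm:lifting} and the structured linear algebra bounds of Proposition~\ref{prop:toeplitzcompl} in order to save a factor of $k$ over a direct FGLM-style computation.

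First, by Theorem~\ref{thm:lifting}(3), the staircase $S_{I_{2k}, \ormord}$ factors as $\{z^j s : 0 \leq j < 2k,\ s \in S\}$ with $S \coloneqq S_{\genI, \targmord}$ of cardinality $D$; moreover, by Theorem~\ref{thm:lifting} the quotient $\field[\xz]/I_{2k}$ is free of rank $D$ over $\field[z]/\langle z^{2k}\rangle$ with basis $S$. Consequently, in the basis $\{z^j s\}$, multiplication by $z$ is a pure shift (requiring no arithmetic operations, since $z^{2k} \in I_{2k}$) and, for each $x_i \in \xx$, multiplication by $x_i$ is block-Toeplitz of type $(2k, D)$ in the sense of Definition~\ref{def:bltoep}: writing $x_i s \equiv \sum_{s' \in S} a_{s', i, s}(z)\, s' \pmod{I_{2k}}$ with $a_{s', i, s} \in \field[z]/\langle z^{2k}\rangle$, the matrix is fully described by the $2k$ blocks $f_d \coloneqq \bigl([z^d] a_{s', i, s}\bigr)_{s', s \in S} \in \field^{D \times D}$, $0 \leq d < 2k$.

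The first $k$ blocks $f_0, \ldots, f_{k-1}$ can be read off for free from the given multiplication tensor of $I_k$, since by uniqueness of normal form the corresponding coefficients for $I_k$ agree with $a_{s', i, s}$ modulo $z^k$. To compute $f_k, \ldots, f_{2k-1}$, I would form, for each pair $(x_i, s)$ with $s \in S$, the polynomial $r_{i,s} \coloneqq x_i s - \sum_{s' \in S} a^{\mathrm{low}}_{s', i, s}(z)\, s'$, where $a^{\mathrm{low}}$ denotes the already known truncation of degree $<k$. Then $r_{i,s}$ lies in $I_k$, and its $\ormord$-normal form modulo $I_{2k}$ is supported on $\{z^j s' : k \leq j < 2k,\ s' \in S\}$, so it reads off exactly the sought blocks $f_k, \ldots, f_{2k-1}$ acting on $s$. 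This reduction is then encoded as a linear system whose coefficient matrix is built from the tails of the elements of $G_{2k}$ and inherits block-Toeplitz structure of type $(2k, D)$ from the $z$-shift invariance of the $\field[z]/\langle z^{2k}\rangle$-module structure. By Proposition~\ref{prop:toeplitzcompl}, the system is inverted once in $\qcompl{k D^{\omega}}$ operations and each of the $cD$ right-hand sides (one per pair $(x_i, s)$) is then handled by a block-Toeplitz matrix-vector product of cost $\qcompl{k D^2}$, for a total of $\qcompl{k D^{\omega} + c D \cdot k D^2} = \qcompl{k c D^3}$, using $\omega \leq 3$.

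The main obstacle is to write down this block-Toeplitz linear system explicitly and to verify that the $z$-shift invariance of Definition~\ref{def:bltoep} is really satisfied. The key observation is that the only element of $H_{2k}$ involving only $z$ is $z^{2k}$, which acts as a pure truncation at the top $z$-layer, so the action of any $g \in G_{2k}$ on a shifted staircase monomial $z^j s$ is simply the $z^j$-shift of its action on $s$ (truncated at $z$-degree $<2k$). Once this structure is rigorously established, the stated complexity follows directly from Proposition~\ref{prop:toeplitzcompl}.
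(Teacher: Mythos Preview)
Your setup is right and matches the paper: the staircase factors as $\bigcup_{j<2k} z^j S$, multiplication by $z$ is a free shift, each $M_{x_i}$ is block-Toeplitz, and the first $k$ layers come for free from the tensor of $I_k$. Two notational slips: $S$ must be the $\ormord$-staircase $S_{I_1,\ormord}$ (equivalently $S_{\genI,\ormord|_{\xx}}$), not the $\targmord$-staircase; and the Gr\"obner basis whose tails you want is $H_{2k}$, not $G_{2k}$.

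The real gap is the step ``this reduction is then encoded as a linear system whose coefficient matrix is built from the tails of the elements of $H_{2k}$ and inherits block-Toeplitz structure of type $(2k,D)$.'' You never specify the system, and there is no obvious one with those properties. A normal-form computation with respect to $H_{2k}$ is an iterated reduction, not a single solve: the monomial $x_i s$ need not be a \emph{minimal} leading term of $I_{2k}$, so in general there is no $h\in H_{2k}$ with $\lm(h)=x_i s$, and one reduction step can reintroduce non-staircase monomials in $\mon(\xx)$ that themselves require further reduction. Simply listing the tails of $H_{2k}$ gives a $|H_{2k}|\times D$ tableau over $\field[z]/\langle z^{2k}\rangle$, not a square $2kD\times 2kD$ system whose inverse applied to $r_{i,s}$ returns its normal form. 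Your own last paragraph concedes that writing this system down explicitly is ``the main obstacle''; as it stands the appeal to Proposition~\ref{prop:toeplitzcompl} is not grounded.

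The paper closes exactly this gap by the standard \FGLM induction rather than a direct solve. One sorts $\xx S$ by $\ormord$ and processes each $u\in\xx S$ in increasing order: if $u\in S$ or $u\in\lm(H_{2k})$ the normal form is read off for free; otherwise write $u=x_j v$ with $v\in\xx S$ strictly smaller, so that $c_{I_{2k},\ormord}(u)=M_{x_j}\,c_{I_{2k},\ormord}(v)$. Because $c_{I_{2k},\ormord}(v)$ is supported on $z^i s'$ with $s'\ormord v$, the columns of $M_{x_j}$ that are needed correspond to elements $x_j s'\ormord u$ and have therefore already been filled in at this stage. The product then costs $\qcompl{kD^2}$ by the block-Toeplitz bound, and summing over at most $cD$ elements of $\xx S$ gives $\qcompl{kcD^3}$. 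This inductive use of the partially built multiplication matrices is the missing ingredient in your argument.
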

\begin{proof}
  Let $S \coloneqq S_{I_1,\ormord}$. Applying the third item
  of 
  \Cref{thm:lifting} with $\ormord$ instead of $\mord$, since
  $\ormord$ eliminates $\xx$, for any $\ell\in \NN$, yields
  \[S_{I_\ell,\ormord} = \bigcup_{i=0}^{\ell-1}z^i S.\]

  Let us now describe the structure of the multiplication matrices of
  $I_{2k}$, i.e.\ the matrices
  \[M(I_{2k},\ormord)_y\coloneqq (c_{I_{2 k},\ormord}(y u))_{u\in S_{I_{2 k},\ormord}},\quad\text{for $y\in \xx\cup\zz$}.\]
  The matrix $(c_{I_{2 k},\ormord}(z u))_{u\in S_{I_{2 k},\ormord}}$ of the multiplication by $z$
  is
  \[\bbordermatrix{& S & z S & \dots & z^{2k-1}S \cr
      S & & & &\cr
      z S & \Id & & & \cr
      \vdots & & \ddots & & \cr
      z^{2k-1}S & & & \Id \cr
    }\]
  and so is extracted without any arithmetic operations. Further, denote
  $S_0 \coloneqq \bigcup\limits_{i=0}^{k-1}z^i S$ and $S_1 \coloneqq \bigcup\limits_{i=k}^{2k-1}z^i S = z^k S_0$.
  Now, for
  $x_i\in \xx$ the multiplication matrix $M_{x_i}$ by $x_i$ is determined by
  two matrices $M_{x_i,0},M_{x_i,1}\in \field^{D\times D}$ as follows
  \[M_{x_i}=\bbordermatrix{
      & S_0 & S_1\cr
      S_0 & M_{x_i,0} & \cr
      S_1 & M_{x_i,1} & M_{x_i,0} \cr },
  \]
  where each $M_{x_i,i}$ is easily seen to be block-Toeplitz of type $(\mondeg{k},D)$ and
  $M_{x_i,0}$ is known as part of the $\ormord$-multiplication tensor of
  $I_{k}$. Thanks to the block-Toeplitz structure, it now suffices
  to compute the columns of $M_{x_i,1}$ coming from the normal forms of the
  set $\xx S$, which is of cardinality at most $c D$. Now, we proceed as
  follows: Sort the set $\xx S$ by the monomial order $\ormord$.
  Choose $u\in \xx S$ and suppose that the normal forms of all elements
  less than $u$ in $\xx S$ are known. Two easy cases can arise:
  \begin{enumerate}
  \item $u\in S$, in which case the normal form of $u$ is computed without
    any arithmetic operations;
  \item $u\in \lm(H_{2k})$, in which case the normal form of $u$ is
    computed without any arithmetic operations, it is just given by
    the tail of the corresponding element in $H_{2k}$.
  \end{enumerate}
  Lastly, it can happen that $u\in \lm(I_{2k})$ but
  $u\notin \lm(H_{2k})$. In this case there exists $v\in \xx S$ and
  $x_j\in \xx$ with $u = x_j v$. By assumption the normal form of $v$ is
  known and so is the normal form of each element $x_j b$ with $b\in S$
  and $b\ormord v$. Since $M_{x_j}$ has the same structure as $M_{x_i}$, we
  can now compute the required column of $M_{x_i,1}$ as the sum of two
  matrix-vector products where each of the two matrices is
  block-Toeplitz of type $(\mondeg{k},D)$. This is done in time $\qcompl{\mondeg{k} D^2}$
  thanks to \Cref{prop:toeplitzcompl}, concluding the proof, since
  $\xx S$ has cardinality at most $c D$.
\end{proof}

\myvspace{-0.5em}
Now, we can estimate the complexity of lifting the set $G_{k}$:
\myvspace{-0.5em}
\begin{corollary}
  \label{thm:compl}
  Let $k\in \NN$ and $\mondeg{k}$ be the number of
  monomials in $\zz$ up to degree $k$. Let $c$ be the cardinality of
  $\xx$. Suppose that we are given the set $H_{2k}$, the
  multiplication tensor of $I_{k}$ w.r.t. $\ormord$, the
  $\ormord$-normal forms of the $\targmord$-staircase of $I_1$
  w.r.t. $I_k$ and the $\ormord$-normal forms of the minimal
  $\targmord$-leading monomials of $I_0$ w.r.t. $I_{k}$. Then $G_{2k}$
  is computed in arithmetic complexity $\qcompl{\mondeg{k} c D^{3}}$.
\end{corollary}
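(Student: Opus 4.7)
The plan is to invoke \Cref{thm:lifttensor} to produce the full $\ormord$-multiplication tensor of $I_{2k}$, and then observe that the elements of $G_{2k}$ can be read off from this tensor at negligible extra cost.

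I would first apply \Cref{thm:lifttensor} to the provided data $H_{2k}$ and $M(I_k,\ormord)$ to produce $M(I_{2k},\ormord)$ in $\qcompl{kcD^{3}}$ arithmetic operations. The two extra hypotheses---$\ormord$-normal forms modulo $I_k$ of the $\targmord$-staircase of $\genI$ and of its minimal $\targmord$-leading monomials---are precisely the initial data that the proof of \Cref{thm:lifttensor} relies on to propagate the new blocks $M_{x_i,1}$ by traversing the set $\xx\cdot S_{\targmord}$ in increasing $\ormord$ order.

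Next, for each $g\in G$ with $\targmord$-leading monomial $w_0\in\mon(\xx)$, minimality of $w_0$ in $\lm_{\targmord}(\genI)$ lets me factor $w_0=x_\ell\cdot s$ with $x_\ell\in\xx$ and $s\in S_{\targmord}$. By item~(\ref{enum:lifting:3}) of \Cref{thm:lifting}, $s\in S_{I_{2k},\ormord}$, so $c_{I_{2k},\ormord}(s)$ is a unit basis vector and $c_{I_{2k},\ormord}(w_0)$ is exactly the column of $M(I_{2k},\ormord)_{x_\ell}$ indexed by $s$, already stored in the computed tensor. I would then show, following the generic-freeness argument used in item~(\ref{enum:lifting:2}) of \Cref{thm:lifting} combined with a truncation modulo $z^{2k}$ of the formal power series expansion of $g\in\genI$, that $g_{2k}\in I_{2k}$. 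Since the tail of $g_{2k}$ already sits in $\mathrm{span}_{\field}(S_{I_{2k},\ormord})$, this forces $g_{2k}$ to equal $w_0$ minus the element of $\mathrm{span}_{\field}(S_{I_{2k},\ormord})$ represented by $c_{I_{2k},\ormord}(w_0)$.

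Copying the at most $cD$ relevant columns then adds $O(kcD^{2})$ operations, absorbed by the cost of \Cref{thm:lifttensor}, giving the announced complexity. The main obstacle is the clean justification that $g_{2k}\in I_{2k}$; the rest is direct lookup in the block-Toeplitz representation of the multiplication tensor.
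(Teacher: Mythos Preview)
Your argument conflates two different staircases. Throughout \Cref{thm:lifttensor} and its proof, $S$ denotes the $\ormord$-staircase $S_{I_1,\ormord}$, and the multiplication tensor $M(I_{2k},\ormord)$ is indexed by $\bigcup_{i<2k} z^i S$. The set $S_{\targmord}$ (the $\targmord$-staircase of $\genI$) is in general a \emph{different} set of monomials. So when you write $w_0=x_\ell s$ with $s\in S_{\targmord}$, there is no reason for $s$ to lie in $S_{I_{2k},\ormord}$, and $c_{I_{2k},\ormord}(s)$ is not a unit vector; the column of the tensor you want to ``look up'' is simply not there. Relatedly, the extra hypotheses in the statement (the $\ormord$-normal forms of $T$ and of $L$ modulo $I_k$) are not inputs that \Cref{thm:lifttensor} consumes---that theorem only needs $H_{2k}$ and $M(I_k,\ormord)$---they are the seed data for a separate computation that you have omitted.

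The deeper gap is the missing change of basis. Even granting $g_{2k}\in I_{2k}$ (which is immediate from item~(\ref{enum:lifting:3}) of \Cref{thm:lifting}, not an obstacle), the tail of $g_{2k}$ is a $\field$-linear combination of monomials in $\bigcup_{i<2k} z^i T$, \emph{not} in $\bigcup_{i<2k} z^i S$. The element $w_0-\mathrm{NF}_{\ormord}(w_0)$ that you produce also lies in $I_{2k}$, but its tail lives in the $\ormord$-staircase; it is not equal to $g_{2k}$. To recover the actual coefficients of $g_{2k}$ in the $T$-basis one must solve a linear system: compute the $\ormord$-normal forms of all of $T\cup L$ modulo $I_{2k}$ (using the tensor and the block-Toeplitz structure), assemble them into a block matrix, and invert the $T$-block to express each column from $L$ in terms of the $T$-columns. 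The paper carries out exactly this inversion, and the cost estimate $\qcompl{kcD^3}$ comes from bounding both the normal-form computations and the structured solve via \Cref{prop:toeplitzcompl}. Your proposal skips this entire step, so the argument does not establish the corollary.
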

\begin{proof}
  By \Cref{thm:lifttensor}, the $\ormord$-multiplication tensor of
  $I_{2k}$ can be computed in arithmetic complexity
  $\qcompl{\mondeg{k} c D^3}$. Having computed this tensor, we proceed
  as follows: let $S$ be the $\ormord$-staircase of
  $I_1 = I+\langle z\rangle$, $T$ be the $\targmord$-staircase of
  $I_1$ and $L$ be the set of minimal $\targmord$-leading terms of
  $I_1$, with $z$ removed. Denoting $S_0$ and $S_1$ as in the proof of
  the preceding theorem, and similarly $T_0$ and $T_1$, now we first
  compute the $\ormord$-normal forms of each element in $T \cup L$
  w.r.t.\ $I_{2k}$, this will yield a tableau of the form
  \[C \coloneqq \bbordermatrix{
      & T_0 & T_1 & L\cr
      S_0 & C_0 & & D_0\cr
      S_1 & C_1 & C_0 & D_1\cr }.
  \]
  Note that by assumption the matrix
  $C_0$ is already known by the $\ormord$-normal forms of $T$
  w.r.t. $I_1$ and the matrix $D_0$ is given by the $\ormord$-normal
  forms of $L$ w.r.t. $I_1$. Note also that $C_0$ and $C_1$ are again
  block-Toeplitz of type $(\mondeg{k},D)$.

  The required matrices $C_1$ and $D_1$ can be computed by using the
  $\ormord$-multiplication tensor of $I_{2k}$ , enumerating the
  monomials in $\mon(\xx)$ in order of $\targmord$ and computing their
  normal forms via matrix-vector multiplications similar to the proof
  of the preceding theorem. Combining this with the block-Toeplitz
  structure of the multiplication matrices of $I_{2k}$ w.r.t.\
  $\ormord$, this can again be done in time $\qcompl{\mondeg{k} c D^3}$.  Finally,
  to compute the set $G_{2k}$, we have to write each column in $C$
  corresponding to an element in $L$ as a $\field$-linear combination
  of the columns corresponding to $T_0\cup T_1$,
  i.e. by solving the linear system
  \[
    \begin{bmatrix}
      C_0 & \cr
      C_1 & C_0\cr
    \end{bmatrix}
    \begin{bmatrix}
      X_0\cr
      X_1
    \end{bmatrix} =
    \begin{bmatrix}
      D_0\cr
      D_1
    \end{bmatrix},
  \]
  where $X_0$ is known via $G_k$. Hence this requires
  \begin{itemize}
  \item inverting the submatrix $C_0$ which, by
    \Cref{prop:toeplitzcompl}, is done in time $\qcompl{\mondeg{k} D^{\omega}}$;
  \item computing the product $C_0^{-1}(D_1-C_1X_0)$.
  \end{itemize}
  Note that $C_0^{-1}$ has displacement rank bounded above by $2D+ 2$,
  see~\cite[Proposition~10.10 and Theorem~10.11]{aecf-2017-livre}, and
  that the cardinality of $L$ is upper bounded by $c D$. Thus, for the
  second step above, we have to compute at most $c D$ matrix-vector
  products of the form $C_0^{-1}v$ and $cD$ matrix-vector products of
  the form $C_1v$. Again, thanks to 
  \Cref{prop:toeplitzcompl}, this is done in time
  $\qcompl{\mondeg{k} c D^3}$. This finally yields the desired
  complexity.
\end{proof}

The following corollary now gives the complexity of computing
successively the sets $G_{2^i}$ from $H_{2^i}$ until $i$ is large
enough to recover $G$, like in \Cref{alg:genfglm}.

\begin{corollary}
  \label{cor:algcompl}
  For $k\in\NN$, let $\mondeg{k}$ be the number of
  monomials in $\zz$ up to degree $k$. Let $c$ be the cardinality of $\xx$.
  Let $\delta-1$ be the maximum degree of all numerators and denominators of
  all coefficients of $G$. Further, let $\ell$ be minimal such
  that $2^{\ell}\geq 2\delta$.

  Given $H_{2^{\ell}}$, computing successively the sets $G_{2^i}$, for
  $i = 1,\dots,\ell$, can be done in arithmetic complexity
  $\qcompl{\mondeg{2^{\ell}} c D^3}=\qcompl{\mondeg{\delta} c D^3}$.
\end{corollary}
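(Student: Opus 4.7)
The plan is a straightforward doubling induction on $i$, combined with a geometric-series bookkeeping. At step $i \in \{1, \ldots, \ell\}$, I invoke \Cref{thm:compl} with $k = 2^{i-1}$, which lifts $G_{2^{i-1}}$ to $G_{2^i}$ at arithmetic cost $\qcompl{2^{i-1} c D^3}$. The calls chain together naturally: on its way to $G_{2^i}$, the $i$-th invocation also produces the $\ormord$-multiplication tensor of $I_{2^i}$ and the updated tables of $\ormord$-normal forms of the $\targmord$-staircase and of the minimal $\targmord$-leading monomials of $I_1$ with respect to $I_{2^i}$; these are precisely the auxiliary data required by the $(i+1)$-th invocation. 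The base case $i=1$ is initialized by running plain \FGLM on $H_1$, which yields both $G_1$ and the $\ormord$-multiplication tensor of $I_1$ for free.

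One preliminary point is that \Cref{thm:compl} wants $H_{2^i}$ as input at step $i$, whereas only $H_{2^\ell}$ is assumed to be given. By item~\ref{enum:lifting:3} of \Cref{thm:lifting}, for each $i$ we have $H_{2^i} = G_{2^i} \cup \{z^{2^i}\}$; since $G_{2^i}$ is the coefficient-wise truncation modulo $z^{2^i}$ of $G_{2^\ell}$, each $H_{2^i}$ for $i \leq \ell$ is extracted from $H_{2^\ell}$ at no arithmetic cost.

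Granted the chaining, summing the per-step complexities gives the geometric series
\[
  \sum_{i=1}^{\ell}\qcompl{2^{i-1} c D^3} \;=\; \qcompl{(2^{\ell}-1)\, c D^3} \;=\; \qcompl{2^{\ell} c D^3}.
\]
By the minimality of $\ell$ with $2^{\ell}\geq 2\delta$, we have $2^{\ell-1} < 2\delta$, hence $2^{\ell} < 4\delta$, which turns the bound into $\qcompl{\delta c D^3}$, as claimed. The only delicate step is to spell out that the auxiliary tables consumed by \Cref{thm:compl} at step $i+1$ are literally the ones produced during step $i$, so that no hidden recomputation cost sneaks in; once this is made explicit, nothing else enters the analysis beyond the geometric sum and the size estimate on $2^{\ell}$.
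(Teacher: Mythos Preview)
Your argument is correct and mirrors the paper's proof: extract each $H_{2^i}$ from $H_{2^\ell}$ at no arithmetic cost, invoke \Cref{thm:compl} at every doubling step, and sum the resulting geometric series; you additionally spell out the chaining of the multiplication tensor and normal-form tables between successive calls and the estimate $2^\ell < 4\delta$, both of which the paper leaves implicit.

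One notational slip to fix: when you cite item~\ref{enum:lifting:3} of \Cref{thm:lifting} to write $H_{2^i} = G_{2^i}\cup\{z^{2^i}\}$, the $G$ in that theorem is the reduced $\ormord$-Gröbner basis of $\genI$, not the $\targmord$-basis $G$ that the corollary is computing. With the $\targmord$-basis the identity is false, and that version of $G_{2^i}$ would not even be available before step $i$ begins. Replacing $G$ there by the $\ormord$-basis of $\genI$ (whose truncations are precisely the non-monomial parts of the $H_{2^i}$) repairs the sentence and yields exactly the conclusion you need, matching the paper's one-line remark that the $H_{2^i}$ are obtained from $H_{2^\ell}$ free of arithmetic operations.
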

\begin{proof}
  Note that the sets $H_{2^i}$, for $i=1,\dots, \ell$, are obtained from
  $H_{2^{\ell}}$ free of arithmetic operations. By \Cref{thm:compl}, the
  computation of $G_{2^i}$ requires $\qcompl{\mondeg{2^i} c D^3}$
  operations. Since
  $\mondeg{2^i}=\binom{n-c+2^i}{n-c}=\compl{2^{(n-c)i}}$, where $n$ is
  the total number of variables, and thus $n-c$ the cardinality of
  $\zz$, summing these complexities for $i$ from $1$ to $\ell$ yields the
  desired complexity.
\end{proof}

Note that going up to degree $2^{\ell}$ suffices to recover the
coefficients of $G$ by Padé approximation thanks to 
\Cref{lem:pade}.

\begin{remark}
  In a 
  follow-up paper, we plan to study the complexity of our algorithm
  using variants of \FGLM,
  such as~\cite{faugere2014a,faugere2017a,neiger2020,berthomieu2022b}.
\end{remark}

We close this section by pointing out a well-known case in which
$\ormord$ is the $\drll$ order, the required Gröbner bases $H_u$ of
$I + \maxi_u$ are extracted without any arithmetic operations of the
$\drll$-Gröbner basis $H$ of $I$ and the $\drll$-staircase of
$I+\maxi_u$ behaves the same as in the above case when $\ormord$ is a
block order. We start with

\begin{definition}
  Let $y$ be
  an extra
  variable and let
  $\homI \subset \field[\xz,y]$ be the homogenization of $I$ w.r.t
  $y$. Suppose that $\homI$ is Cohen-Macaulay. We say that $I$ is in
  \emph{projective generic position} if $\{y\}\cup \zz$ is a maximal
  homogeneous regular sequence in $\field[\xz,y]/\homI$.
\end{definition}

Supposing that $\homI$ is Cohen-Macaulay we now have the following
statement, see e.g.~\cite{lejeune-jalabert1986}.  This statement
has frequently been used in the complexity analysis of Gröbner basis
algorithms.

\begin{lemma}
  \label{lem:fvar}
  Let $I$ be in projective generic position with $\homI$
  Cohen-Macaulay.  Let $H$ be the reduced $\drll$-Gröbner basis of $I$
  (with the variables in $\zz$ considered smaller as those in
  $\xx$). Then
  \[\lm(H) \subset \mon(\xx).\]
  In particular, if $S$ is the $\drll$-staircase of $I_1\coloneqq I + \maxi$,
  then the $\drl$-staircase of $I + \maxi_u$ is given by
  \[S_u\coloneqq \bigcup_{v\drl u}v S.\]
\end{lemma}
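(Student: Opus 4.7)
The plan is to split the statement into two parts: the leading-monomial claim $\lm(H)\subset\mon(\xx)$, which I would reduce to a classical depth/initial-ideal correspondence on the homogenization $\homI$, and the staircase description, which then follows from a coprime-leading-monomial argument.

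For the first claim I would proceed as follows. Since $y$ is chosen as the smallest variable for $\drll$, the dehomogenization $h\mapsto h(\xz,1)$ sends the reduced $\drll$-Gröbner basis of $\homI$ to one of $I$, and preserves leading monomials: on any set of monomials of the same total degree, $\drll$ prefers the one with the smallest $y$-exponent, so the $\drll$-leading monomial of a homogeneous polynomial has the smallest $y$-exponent in its support, and setting $y=1$ merely strips this factor while keeping the monomial $\drll$-largest. It thus suffices to show that the $\drll$-leading monomials of $\homI$ avoid $y$ and the variables in $\zz$. This is precisely the Bayer--Stillman-type criterion recorded in \cite{lejeune-jalabert1986}: since $\field[\xz,y]/\homI$ is Cohen--Macaulay and $\{y\}\cup\zz$ is a regular sequence consisting of the smallest variables for $\drll$, none of these variables appears in the reduced $\drll$-leading monomial ideal of $\homI$.

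For the ``in particular'' part I would first note that the staircase $S$ of $I+\maxi$ sits in $\mon(\xx)$: a monomial in $S$ must be divisible neither by any $\lm(h)$ with $h\in H$ (all of which live in $\mon(\xx)$ by the first claim), nor by any $z_i$. Next, since $\lm(H)\subset\mon(\xx)$ and $M_u\subset\mon(\zz)$, every pair $(h,m)\in H\times M_u$ has coprime leading monomials, so Buchberger's product criterion gives immediately that $H\cup M_u$ is a $\drll$-Gröbner basis of $I+\maxi_u$. Writing any monomial uniquely as $w=v\cdot t$ with $v\in\mon(\zz)$ and $t\in\mon(\xx)$, the condition $w\notin\lm(I+\maxi_u)$ then decouples into two independent conditions: $t$ is not divisible by any $\lm(h)$, i.e.\ $t\in S$, and $v\notin\maxi_u$, i.e.\ $v\drl u$. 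This yields $S_u=\bigcup_{v\drl u}vS$.

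The main obstacle is really only the first claim, which rests on the depth-to-initial-ideal translation in the Bayer--Stillman tradition; once this is invoked via \cite{lejeune-jalabert1986}, both the dehomogenization step and the staircase description reduce to routine monomial-ideal manipulations.
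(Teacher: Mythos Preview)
Your proof is correct and follows essentially the same route as the paper: both reduce the leading-monomial claim to the homogenized ideal $\homI$ and then invoke the Bayer--Stillman-type criterion (the paper cites \cite[Theorem~15.13]{eisenbud1995}, you cite \cite{lejeune-jalabert1986}; these encode the same depth-to-initial-ideal translation). Your treatment is simply more explicit---you spell out the dehomogenization argument and derive the staircase description via Buchberger's coprime criterion, whereas the paper's proof is a two-line sketch deferring everything to the cited theorem.
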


This implies that when $I$ is such that $\homI$ is Cohen-Macaulay and
is in projective generic position then we can replace $\ormord$ with
$\drll$ and $H_u$ with $H$ in the statements of 
\Cref{thm:lifttensor} and 
\Cref{thm:compl}. Now we are ready to prove:

\begin{proof}[Proof of {\Cref{th:main}}]
  The genericity assumption on $f_1,\ldots,f_c$ implies that they form a
  Cohen-Macaulay ideal in projective generic position and that the
  ideal has degree $D=d_1\cdots d_c$. Thus, \Cref{alg:genfglm} can be
  called on $\{f_1,\ldots,f_c\}$, $\drll$ and the chosen $\targmord$ in
  order to compute the the reduced $\targmord$-Gröbner basis of
  $I\cdot\field(\zz)[\xx]$ up to precision
  $2\delta$. Finally, using \Cref{thm:compl,cor:algcompl}, we obtain the
  desired complexity.
\end{proof}

\begin{remark}
  \label{rem:probability}
  Let us close this section with a remark on the probability of
  $\maxi$ being a point of good specialization in the situation of
  \Cref{th:main} if $\targmord = \lexl$, the \emph{lexicographic} order
  on $\mon(\xx)$. If $\genI$ is in \emph{shape position}, then the
  reduced $\lexl$-Gröbner basis of $\genI$ is of the form
  \[\{g_c(\zz,x_c), x_1-g_1(\zz,x_c),\dots,x_{c-1}-g_{c-1}(\zz,x_c)\}\]
  with $g_c(\zz,x_c) \in \field[\zz,x_c]$ of total degree $D$. One can
  then show, using \cite{schost2003}, that the degree of the lcm of the
  denominators of the coefficients of $g_1,\dots,g_{c-1}$ is bounded
  by $D^2$ where $D\coloneqq d_1\cdots d_c$ is the Bézout bound of our
  system. Then, if $\field = \FF_q$ for a prime power $q$, the
  Schwartz-Zippel lemma \cite{schwartz1980} implies that the
  probability of $\maxi$ not being a point of good specialization is
  bounded by $D^2/q$ which goes to zero as $q$ increases.
\end{remark}

\myvspace{-0.75em}
\section{Benchmarks}
\label{sec:bench}

In this section, we provide benchmarks for a proof-of-concept
implementation of \Cref{alg:genfglm}. We first give a brief
description thereof.

This implementation is written using the computer algebra system
\oscar~\cite{Oscar2022} which itself is written in
\julia~\cite{bezanson2017}. All required Gröbner basis computations
use the Gröbner basis libraries \msolve~\cite{msolve} via its
\julia-interface \texttt{AlgebraicSolving.jl} or
\texttt{Groebner.jl}~\cite{demin2023}, also written in \julia. The
main step in
\Cref{alg:genfglm}, 
\Cref{alg:lift}, was implemented
naively, close to the provided pseudocode, i.e.\ without the use of
multiplication tensors to compute normal forms as described in 
\Cref{sec:compl}. The implementation is available at
  \url{https://gitlab.lip6.fr/mohr/genfglm}.
For the below benchmarks, the following computations were performed,
keeping the notation from the last sections:
\begin{enumerate}
\item Compute a $\drll$-Gröbner basis for the polynomial ideal $I$ in
  question.
\item Use this $\drll$-Gröbner basis to compute a maximally independent
  set of variables modulo $I$, this gives us the partition of the
  variables into the subsets $\xx$ and $\zz$ as above.
\item If $\zz=\{z_1,\dots,z_{n-c}\}$, choose random $a_1,\dots,a_{n-c}\in \field$
  and make the coordinate substitution $z_i\gets z_i-a_i$.
\item Choose $\ormord$ as the block order on $\mon(\xz)$ eliminating
  $\xx$ with $\drll$ on both blocks of variables.
\item If $\xx=\{x_1,\ldots,x_c\}$, choose $\targmord$ as a block order on
  $\mon(\xx)$ eliminating $\xx'\coloneqq \{x_1,\ldots,x_{c-1}\}$ with $\drll$ on
  $\xx'$ and
  $\prec$
  on $\{x_c\}$.
\item By the elimination property of block orders, the target Gröbner
  basis $G$ contains a single polynomial $g_c$ in the univariate
  polynomial ring $\field(\zz)[x_c]$.
\item Use \Cref{alg:genfglm} to compute only the polynomial $g_c$,
  ignoring the rest of the set $G$. Note that this is indeed possible,
  in \cref{line:genfglm:lift} 
  of \Cref{alg:genfglm} we may choose which of the elements
  in $G_{\text{lift}}$ to actually keep and lift and ignore the rest.
\end{enumerate}
In a certain generic situation (more precisely, when the variable
$x_c$ is ``generic''), the computed polynomial $g_c$ can be used
for a primary decomposition of $I$, see e.g.~\cite[Sections~8.6
and~8.7]{becker1993}, this motivates our choice of $\targmord$.  We
should emphasize however that we did not verify whether this generic
situation is met in the examples below. In the context of primary
decomposition, it suffices to know just the polynomial $g_c$ in
$G$, the rest of the Gröbner basis can be ignored. This is a potential
advantage our algorithm has over the classical way of computing
Gröbner bases of generic fibers using elimination orders for which
there is no way of getting around computing the entire set $G$.

We never directly computed the reduced Gröbner basis $H$ of $I$
w.r.t.\ $\ormord$, but only the reduced $\ormord$-Gröbner basis $H_u$
of the ideals $I + \maxi_u$.  When doing this, we found that the
computations were better-behaved when choosing $\ormord$ as above
rather than $\ormord = \drll$. All computations were performed with
$\field = \mathbb{Z}/p\mathbb{Z}$ where $p$ was a randomly chosen prime of $16$ bits.

We compared the time this computation took with the computation of the
set $G$ using \msolve and which, in this case, just runs the \ffour
algorithm with a suitable block order on $\mon(\xz)$. These timings
are given in 
\Cref{tbl:bench}.

The polynomial systems used for these benchmarks are:
\begin{itemize}
\item ED(3,3) encodes the parametric \emph{euclidean distance} problem
  for a hypersurface of degree 3 in 3 variables, see~\cite{draisma2014};
\item R1, R2, R3 come from a problem in Robotics, see~\cite{garciafontan2022};
\item M2 and M3 are certain jacobian ideals of single multivariate
  polynomials which define singular hypersurfaces;
\item The ``PS'', ``Sing'' and ``SOS'' systems are all critical loci
  of certain projections, see~\cite{eder2023a} for a more detailed
  description;
\item The RD($d$) systems are randomly generated sequences of $3$
  polynomials of degree $d$ in $4$ variables.
\end{itemize}

All computations were performed on an Intel Xeon Gold 6244 CPU @ 3.60
GHz with 1.5 TB of memory. To illustrate the memory usage of our
algorithm, we give in addition the total memory allocated by our
implementation.

\begin{table}[]
  \caption{Benchmarks for 
    \Cref{alg:genfglm}}
  \label{tbl:bench}
  \centering
  \begin{tabular}{l|r|r|r|}
    &\multicolumn{2}{c|}{\Cref{alg:genfglm}} & \msolve with $\targmord$\\[0pt]
    System & Timing (in s) & Memory & Timing (in s)\\[0pt]
    \hline
    ED(3,3) & 237.9 & 95.47 GB & 43521.43\\[0pt]
    R1 & 0.01 & 140.49 GB & 0.01\\[0pt]
    R2 & 0.01 & 251.95 MB & 0.01\\[0pt]
    R3 & 0.01 & 248.93 MB & 0.01\\[0pt]
    M2 & 2.75 & 1.56 GB & 0.03\\[0pt]
    M3 & 0.19 & 410.09 MB & 0.01\\[0pt]
    PS(2,10) & 0.8 & 417.63 MB & 0.3\\[0pt]
    PS(2,12) & 44.82 & 1.38 GB & 7.3\\[0pt]
    Sing(2,10) & 0.2 & 275.17 MB & 0.1\\[0pt]
    SOS(5,4) & 1.2 & 1.2 GB & 0.3\\[0pt]
    SOS(6,4) & 11.97 & 1.07 GB & 30.35\\[0pt]
    SOS(6,5) & 22.19 & 954.41 MB & 26.61\\[0pt]
    RD(3) & 4.29 & 650.98 MB & 0.11\\[0pt]
    RD(4) & 33.42 & 10.46 GB & 13.43\\[0pt]
    RD(5) & 729.51 & 185.87 GB & 780.92\\[0pt]
  \end{tabular}
  \myvspace{-0.25em}
\end{table}

On most small examples in \Cref{tbl:bench}, we achieve a comparable
timing with \msolve, with the exception of PS(2,12). On the larger
example RD(5) we have a small improvement, while on ED(3,3) we achieve
a much better timing. We should mention that our implementation is not
yet optimized (for example, we observed that the linear systems to be
solved in \Cref{alg:lift} are very sparse, yet our implementation
relies on dense representations and arithmetic) and does not
incorporate the idea mentioned in \Cref{rem:tracer}.

\begin{example}
  Continuing \Cref{ex:cycliccont}, we provide a detailed log file with
  explanatory comments for running our implementation on the Cyclic~8
  polynomial system at
  \url{https://polsys.lip6.fr/~mohr/assets/fglm_log.txt}.
\end{example}

\newpage
\bibliographystyle{ACM-Reference-Format}
\bibliography{paper}
\end{document}